\tikzstyle{arrowhead}=[regular polygon,regular polygon sides=3,draw,scale=0.2,inner sep=-0.15pt,minimum width=6mm,fill=black,regular polygon rotate=180]
\tikzstyle{trace}=[circuit ee IEC,thick,ground,rotate=0,scale=2]
\tikzstyle{wavy}=[decorate,decoration={snake, segment length=1mm, amplitude=0.3mm}]
\tikzstyle{mopoint}=[shape=semicircle, fill=white,draw=black,shape border rotate=180,scale =0.75]
\tikzstyle{mocopoint}=[shape=semicircle, fill=white,draw=black,minimum width = 0.9cm, scale =0.75, xscale=0.7]
\tikzstyle{cpoint}=[shape=semicircle, fill=white,draw=black,minimum width = 0.9cm, scale =0.75, xscale=1, yscale=0.7, shape border rotate = 90,font=\fontsize{14}{16}\selectfont]
\tikzstyle{cocpoint}=[shape=semicircle, fill=white,draw=black,minimum width = 0.9cm, scale =0.75, xscale=1, yscale=0.7, shape border rotate = 270,font=\fontsize{14}{16}\selectfont]
\tikzstyle{slit}=[line width=2]
\tikzstyle{block}=[line width=4,red,line cap=round]
\tikzstyle{screen}=[line width=4,black,line cap=round]
\tikzstyle{di}=[diamond,draw,inner sep=0.5pt,font=\small, minimum size = .5cm]
\tikzstyle{sbox}=[rectangle,draw]
\tikzstyle{mirror}=[line width=2,black]
\tikzstyle{traceState}=[circuit ee IEC,thick,ground,rotate=180,scale=2]
\tikzstyle{detEff}=[circuit ee IEC,thick,ground,rotate=90,scale=1.4]
\tikzstyle{maxMix}=[circuit ee IEC,thick,ground,scale=1.4]
\tikzstyle{particlePath}=[line width=2,gray!40, line cap =round]
\tikzstyle{bwSpider}=[
 \tikzstyle{wbSpider}=[
\tikzstyle{cWire}=[densely dotted, thick]
\tikzstyle{env}=[copoint,regular polygon rotate=0,minimum width=0.2cm, fill=black]
\tikzstyle{probs}=[shape=semicircle,fill=white,draw=black,shape border rotate=180,minimum width=1.2cm]
\tikzstyle{every picture}=[baseline=-0.25em,scale=0.5]
\tikzstyle{dotpic}=[] 
\tikzstyle{diredges}=[every to/.style={diredge}]
\tikzstyle{math matrix}=[matrix of math nodes,left delimiter=(,right delimiter=),inner sep=2pt,column sep=1em,row sep=0.5em,nodes={inner sep=0pt},text height=1.5ex, text depth=0.25ex]
\tikzstyle{inline text}=[text height=1.5ex, text depth=0.25ex,yshift=0.5mm]
\tikzstyle{label}=[font=\footnotesize,text height=1.5ex, text depth=0.25ex,yshift=0.5mm]
\tikzstyle{left label}=[label,anchor=east,xshift=1.mm]
\tikzstyle{right label}=[label,anchor=west,xshift=-1.mm]
\tikzstyle{braceedge}=[decorate,decoration={brace,amplitude=2mm,raise=-1mm}]
\tikzstyle{small braceedge}=[decorate,decoration={brace,amplitude=1mm,raise=-1mm}]
\tikzstyle{doubled}=[line width=1.6pt] 
\tikzstyle{boldedge}=[doubled,shorten <=-0.17mm,shorten >=-0.17mm]
\tikzstyle{boldedgegray}=[doubled,gray,shorten <=-0.17mm,shorten >=-0.17mm]
\tikzstyle{singleedgegray}=[gray]
\tikzstyle{semidoubled}=[line width=1.4pt] 
\tikzstyle{semiboldedgegray}=[semidoubled,gray,shorten <=-0.17mm,shorten >=-0.17mm]
\tikzstyle{boxedge}=[semiboldedgegray]
\tikzstyle{boldedgedashed}=[very thick,dashed,shorten <=-0.17mm,shorten >=-0.17mm]
\tikzstyle{vboldedgedashed}=[doubled,dashed,shorten <=-0.17mm,shorten >=-0.17mm]
\tikzstyle{left hook arrow}=[left hook-latex]
\tikzstyle{right hook arrow}=[right hook-latex]
\tikzstyle{sembracket}=[line width=0.5pt,shorten <=-0.07mm,shorten >=-0.07mm]
\tikzstyle{causal edge}=[->,thick,gray]
\tikzstyle{causal nondir}=[thick,gray]
\tikzstyle{timeline}=[thick,gray, dashed]
\tikzstyle{cedge}=[<->,thick,gray!70!white]
\tikzstyle{empty diagram}=[draw=gray!40!white,dashed,shape=rectangle,minimum width=1cm,minimum height=1cm]
\tikzstyle{empty diagram small}=[draw=gray!50!white,dashed,shape=rectangle,minimum width=0.6cm,minimum height=0.5cm]
\tikzstyle{dot}=[inner sep=0mm,minimum width=2mm,minimum height=2mm,draw,shape=circle]
\tikzstyle{leak}=[white dot, shape=regular polygon, minimum size=3.3 mm, regular polygon sides=3, outer sep=-0.2mm, regular polygon rotate=270]
\tikzstyle{proj}=[white dot, shape=regular polygon, minimum size=3.3 mm, regular polygon sides=4, outer sep=-0.2mm]
\tikzstyle{Vleak}=[white dot, shape=regular polygon, minimum size=3.3 mm, regular polygon sides=3, outer sep=-0.2mm, regular polygon rotate=90]
\tikzstyle{dleak}=[white dot, line width=1.6pt, shape=regular polygon, minimum size=3.3 mm, regular polygon sides=3, outer sep=-0.2mm, regular polygon rotate=270]
\tikzstyle{Wsquare}=[white dot, shape=regular polygon, rounded corners=0.8 mm, minimum size=3.3 mm, regular polygon sides=3, outer sep=-0.2mm]
\tikzstyle{Wsquareadj}=[white dot, shape=regular polygon, rounded corners=0.8 mm, minimum size=3.3 mm, regular polygon sides=3, outer sep=-0.2mm, regular polygon rotate=180]
\tikzstyle{ddot}=[inner sep=0mm, doubled, minimum width=2.5mm,minimum height=2.5mm,draw,shape=circle]
\tikzstyle{black dot}=[dot,fill=black]
\tikzstyle{white dot}=[dot,fill=white,,text depth=-0.2mm]
\tikzstyle{white Wsquare}=[Wsquare,fill=gray,,text depth=-0.2mm]
\tikzstyle{white Wsquareadj}=[Wsquareadj,fill=white,,text depth=-0.2mm]
\tikzstyle{green dot}=[white dot] 
\tikzstyle{gray dot}=[dot,fill=gray!40!white,,text depth=-0.2mm]
\tikzstyle{red dot}=[gray dot] 
\tikzstyle{black ddot}=[ddot,fill=black]
\tikzstyle{white ddot}=[ddot,fill=white]
\tikzstyle{gray ddot}=[ddot,fill=gray!40!white]
\tikzstyle{gray edge}=[gray!60!white]
\tikzstyle{small dot}=[inner sep=0.5mm,minimum width=0pt,minimum height=0pt,draw,shape=circle]
\tikzstyle{small black dot}=[small dot,fill=black]
\tikzstyle{small white dot}=[small dot,fill=white]
\tikzstyle{small gray dot}=[small dot,fill=gray!40!white]
\tikzstyle{causal dot}=[inner sep=0.4mm,minimum width=0pt,minimum height=0pt,draw=white,shape=circle,fill=gray!40!white]
\tikzstyle{phase dimensions}=[minimum size=5mm,font=\footnotesize,rectangle,rounded corners=2.5mm,inner sep=0.2mm,outer sep=-2mm]
\tikzstyle{dphase dimensions}=[minimum size=5mm,font=\footnotesize,rectangle,rounded corners=2.5mm,inner sep=0.2mm,outer sep=-2mm]
\tikzstyle{white phase dot}=[dot,fill=white,phase dimensions]
\tikzstyle{white phase ddot}=[ddot,fill=white,dphase dimensions]
\tikzstyle{white rect ddot}=[draw=black,fill=white,doubled,minimum size=5mm,font=\footnotesize,rectangle,rounded corners=2.5mm,inner sep=0.2mm]
\tikzstyle{gray rect ddot}=[draw=black,fill=gray!40!white,doubled,minimum size=6mm,font=\footnotesize,rectangle,rounded corners=3mm]
\tikzstyle{gray phase dot}=[dot,fill=gray!40!white,phase dimensions]
\tikzstyle{gray phase ddot}=[ddot,fill=gray!40!white,dphase dimensions]
\tikzstyle{grey phase dot}=[gray phase dot]
\tikzstyle{grey phase ddot}=[gray phase ddot]
\tikzstyle{small phase dimensions}=[minimum size=4mm,font=\tiny,rectangle,rounded corners=2mm,inner sep=0.2mm,outer sep=-2mm]
\tikzstyle{small dphase dimensions}=[minimum size=4mm,font=\tiny,rectangle,rounded corners=2mm,inner sep=0.2mm,outer sep=-2mm]
\tikzstyle{small gray phase dot}=[dot,fill=gray!40!white,small phase dimensions]
\tikzstyle{small gray phase ddot}=[ddot,fill=gray!40!white,small dphase dimensions]
\tikzstyle{small map}=[draw,shape=rectangle,minimum height=4mm,minimum width=4mm,fill=white]
\tikzstyle{cnot}=[fill=white,shape=circle,inner sep=-1.4pt]
\tikzstyle{asym hadamard}=[fill=white,draw,shape=NEbox,inner sep=0.6mm,font=\footnotesize,minimum height=4mm]
\tikzstyle{asym hadamard conj}=[fill=white,draw,shape=NWbox,inner sep=0.6mm,font=\footnotesize,minimum height=4mm]
\tikzstyle{asym hadamard dag}=[fill=white,draw,shape=SEbox,inner sep=0.6mm,font=\footnotesize,minimum height=4mm]
\tikzstyle{hadamard}=[fill=white,draw,inner sep=0.6mm,font=\footnotesize,minimum height=4mm,minimum width=4mm]
\tikzstyle{small hadamard}=[fill=white,draw,inner sep=0.6mm,minimum height=1.5mm,minimum width=1.5mm]
\tikzstyle{small hadamard rotate}=[small hadamard,rotate=45]
\tikzstyle{dhadamard}=[hadamard,doubled]
\tikzstyle{small dhadamard}=[small hadamard,doubled]
\tikzstyle{small dhadamard rotate}=[small hadamard rotate,doubled]
\tikzstyle{antipode}=[white dot,inner sep=0.3mm,font=\footnotesize]
\tikzstyle{scalar}=[diamond,draw,inner sep=0.5pt,font=\small]
\tikzstyle{dscalar}=[diamond,doubled, draw,inner sep=0.5pt,font=\small]
\tikzstyle{small box}=[rectangle,inline text,fill=white,draw,minimum height=5mm,yshift=-0.5mm,minimum width=5mm,font=\small]
\tikzstyle{small gray box}=[small box,fill=gray!30]
\tikzstyle{medium box}=[rectangle,inline text,fill=white,draw,minimum height=5mm,yshift=-0.5mm,minimum width=10mm,font=\small]
\tikzstyle{square box}=[small box] 
\tikzstyle{medium gray box}=[small box,fill=gray!30]
\tikzstyle{semilarge box}=[rectangle,inline text,fill=white,draw,minimum height=5mm,yshift=-0.5mm,minimum width=12.5mm,font=\small]
\tikzstyle{large box}=[rectangle,inline text,fill=white,draw,minimum height=5mm,yshift=-0.5mm,minimum width=15mm,font=\small]
\tikzstyle{large gray box}=[small box,fill=gray!30]
\tikzstyle{Bayes box}=[rectangle,fill=black,draw, minimum height=3mm, minimum width=3mm]
\tikzstyle{gray square point}=[small box,fill=gray!50]
\tikzstyle{dphase box white}=[dhadamard]
\tikzstyle{dphase box gray}=[dhadamard,fill=gray!50!white]
\tikzstyle{phase box white}=[hadamard]
\tikzstyle{phase box gray}=[hadamard,fill=gray!50!white]
\tikzstyle{point}=[regular polygon,regular polygon sides=3,draw,scale=0.75,inner sep=-0.5pt,minimum width=9mm,fill=white,regular polygon rotate=180]
\tikzstyle{point nosep}=[regular polygon,regular polygon sides=3,draw,scale=0.75,inner sep=-2pt,minimum width=9mm,fill=white,regular polygon rotate=180]
\tikzstyle{copoint}=[regular polygon,regular polygon sides=3,draw,scale=0.75,inner sep=-0.5pt,minimum width=9mm,fill=white]
\tikzstyle{dpoint}=[point,doubled]
\tikzstyle{dcopoint}=[copoint,doubled]
\tikzstyle{pointgrow}=[shape=cornerpoint,kpoint common,scale=0.75,inner sep=3pt]
\tikzstyle{pointgrow dag}=[shape=cornercopoint,kpoint common,scale=0.75,inner sep=3pt]
\tikzstyle{wide copoint}=[fill=white,draw,shape=isosceles triangle,shape border rotate=90,isosceles triangle stretches=true,inner sep=0pt,minimum width=1.5cm,minimum height=6.12mm]
\tikzstyle{wide point}=[fill=white,draw,shape=isosceles triangle,shape border rotate=-90,isosceles triangle stretches=true,inner sep=0pt,minimum width=1.5cm,minimum height=6.12mm,yshift=-0.0mm]
\tikzstyle{wide point plus}=[fill=white,draw,shape=isosceles triangle,shape border rotate=-90,isosceles triangle stretches=true,inner sep=0pt,minimum width=1.74cm,minimum height=7mm,yshift=-0.0mm]
\tikzstyle{wide dpoint}=[fill=white,doubled,draw,shape=isosceles triangle,shape border rotate=-90,isosceles triangle stretches=true,inner sep=0pt,minimum width=1.5cm,minimum height=6.12mm,yshift=-0.0mm]
\tikzstyle{tinypoint}=[regular polygon,regular polygon sides=3,draw,scale=0.55,inner sep=-0.15pt,minimum width=6mm,fill=white,regular polygon rotate=180]
\tikzstyle{white point}=[point]
\tikzstyle{white dpoint}=[dpoint]
\tikzstyle{green point}=[white point] 
\tikzstyle{white copoint}=[copoint]
\tikzstyle{gray point}=[point,fill=gray!40!white]
\tikzstyle{gray dpoint}=[gray point,doubled]
\tikzstyle{red point}=[gray point] 
\tikzstyle{gray copoint}=[copoint,fill=gray!40!white]
\tikzstyle{gray dcopoint}=[gray copoint,doubled]
\tikzstyle{white point guide}=[regular polygon,regular polygon sides=3,font=\scriptsize,draw,scale=0.65,inner sep=-0.5pt,minimum width=9mm,fill=white,regular polygon rotate=180]
\tikzstyle{black point}=[point,fill=black,font=\color{white}]
\tikzstyle{black copoint}=[copoint,fill=black,font=\color{white}]
\tikzstyle{tiny gray point}=[tinypoint,fill=gray!40!white]
\tikzstyle{diredge}=[->]
\tikzstyle{ddiredge}=[<->]
\tikzstyle{rdiredge}=[<-]
\tikzstyle{thickdiredge}=[->, very thick]
\tikzstyle{pointer edge}=[->,very thick,gray]
\tikzstyle{pointer edge part}=[very thick,gray]
\tikzstyle{dashed edge}=[dashed]
\tikzstyle{thick dashed edge}=[very thick,dashed]
\tikzstyle{thick gray dashed edge}=[thick dashed edge,gray!40]
\tikzstyle{thick map edge}=[very thick,|->]
\newcommand{\boxshape}[3]{%
\pgfdeclareshape{#1}{
\inheritsavedanchors[from=rectangle] 
\inheritanchorborder[from=rectangle]
\inheritanchor[from=rectangle]{center}
\inheritanchor[from=rectangle]{north}
\inheritanchor[from=rectangle]{south}
\inheritanchor[from=rectangle]{west}
\inheritanchor[from=rectangle]{east}
\backgroundpath{
\southwest \pgf@xa=\pgf@x \pgf@ya=\pgf@y
\northeast \pgf@xb=\pgf@x \pgf@yb=\pgf@y

\@tempdima=#2
\@tempdimb=#3

\pgfpathmoveto{\pgfpoint{\pgf@xa - 5pt + \@tempdima}{\pgf@ya}}
\pgfpathlineto{\pgfpoint{\pgf@xa - 5pt - \@tempdima}{\pgf@yb}}
\pgfpathlineto{\pgfpoint{\pgf@xb + 5pt + \@tempdimb}{\pgf@yb}}
\pgfpathlineto{\pgfpoint{\pgf@xb + 5pt - \@tempdimb}{\pgf@ya}}
\pgfpathlineto{\pgfpoint{\pgf@xa - 5pt + \@tempdima}{\pgf@ya}}
\pgfpathclose
}
}}
\tikzstyle{cloud}=[shape=cloud,draw,minimum width=1.5cm,minimum height=1.5cm]
\tikzstyle{map}=[draw,shape=NEbox,inner sep=2pt,minimum height=6mm,fill=white]
\tikzstyle{dashedmap}=[draw,dashed,shape=NEbox,inner sep=2pt,minimum height=6mm,fill=white]
\tikzstyle{mapdag}=[draw,shape=SEbox,inner sep=2pt,minimum height=6mm,fill=white]
\tikzstyle{mapadj}=[draw,shape=SEbox,inner sep=2pt,minimum height=6mm,fill=white]
\tikzstyle{maptrans}=[draw,shape=SWbox,inner sep=2pt,minimum height=6mm,fill=white]
\tikzstyle{mapconj}=[draw,shape=NWbox,inner sep=2pt,minimum height=6mm,fill=white]
\tikzstyle{medium map}=[draw,shape=NEbox,inner sep=2pt,minimum height=6mm,fill=white,minimum width=7mm]
\tikzstyle{medium map dag}=[draw,shape=SEbox,inner sep=2pt,minimum height=6mm,fill=white,minimum width=7mm]
\tikzstyle{medium map adj}=[draw,shape=SEbox,inner sep=2pt,minimum height=6mm,fill=white,minimum width=7mm]
\tikzstyle{medium map trans}=[draw,shape=SWbox,inner sep=2pt,minimum height=6mm,fill=white,minimum width=7mm]
\tikzstyle{medium map conj}=[draw,shape=NWbox,inner sep=2pt,minimum height=6mm,fill=white,minimum width=7mm]
\tikzstyle{semilarge map}=[draw,shape=NEbox,inner sep=2pt,minimum height=6mm,fill=white,minimum width=9.5mm]
\tikzstyle{semilarge map trans}=[draw,shape=SWbox,inner sep=2pt,minimum height=6mm,fill=white,minimum width=9.5mm]
\tikzstyle{semilarge map adj}=[draw,shape=SEbox,inner sep=2pt,minimum height=6mm,fill=white,minimum width=9.5mm]
\tikzstyle{semilarge map dag}=[draw,shape=SEbox,inner sep=2pt,minimum height=6mm,fill=white,minimum width=9.5mm]
\tikzstyle{semilarge map conj}=[draw,shape=NWbox,inner sep=2pt,minimum height=6mm,fill=white,minimum width=9.5mm]
\tikzstyle{large map}=[draw,shape=NEbox,inner sep=2pt,minimum height=6mm,fill=white,minimum width=12mm]
\tikzstyle{large map conj}=[draw,shape=NWbox,inner sep=2pt,minimum height=6mm,fill=white,minimum width=12mm]
\tikzstyle{very large map}=[draw,shape=NEbox,inner sep=2pt,minimum height=6mm,fill=white,minimum width=17mm]
\tikzstyle{medium dmap}=[draw,doubled,shape=NEbox,inner sep=2pt,minimum height=6mm,fill=white,minimum width=7mm]
\tikzstyle{medium dmap dag}=[draw,doubled,shape=SEbox,inner sep=2pt,minimum height=6mm,fill=white,minimum width=7mm]
\tikzstyle{medium dmap adj}=[draw,doubled,shape=SEbox,inner sep=2pt,minimum height=6mm,fill=white,minimum width=7mm]
\tikzstyle{medium dmap trans}=[draw,doubled,shape=SWbox,inner sep=2pt,minimum height=6mm,fill=white,minimum width=7mm]
\tikzstyle{medium dmap conj}=[draw,doubled,shape=NWbox,inner sep=2pt,minimum height=6mm,fill=white,minimum width=7mm]
\tikzstyle{semilarge dmap}=[draw,doubled,shape=NEbox,inner sep=2pt,minimum height=6mm,fill=white,minimum width=9.5mm]
\tikzstyle{semilarge dmap trans}=[draw,doubled,shape=SWbox,inner sep=2pt,minimum height=6mm,fill=white,minimum width=9.5mm]
\tikzstyle{semilarge dmap adj}=[draw,doubled,shape=SEbox,inner sep=2pt,minimum height=6mm,fill=white,minimum width=9.5mm]
\tikzstyle{semilarge dmap dag}=[draw,doubled,shape=SEbox,inner sep=2pt,minimum height=6mm,fill=white,minimum width=9.5mm]
\tikzstyle{semilarge dmap conj}=[draw,doubled,shape=NWbox,inner sep=2pt,minimum height=6mm,fill=white,minimum width=9.5mm]
\tikzstyle{large dmap}=[draw,doubled,shape=NEbox,inner sep=2pt,minimum height=6mm,fill=white,minimum width=12mm]
\tikzstyle{large dmap conj}=[draw,doubled,shape=NWbox,inner sep=2pt,minimum height=6mm,fill=white,minimum width=12mm]
\tikzstyle{large dmap trans}=[draw,doubled,shape=SWbox,inner sep=2pt,minimum height=6mm,fill=white,minimum width=12mm]
\tikzstyle{large dmap adj}=[draw,doubled,shape=SEbox,inner sep=2pt,minimum height=6mm,fill=white,minimum width=12mm]
\tikzstyle{large dmap dag}=[draw,doubled,shape=SEbox,inner sep=2pt,minimum height=6mm,fill=white,minimum width=12mm]
\tikzstyle{very large dmap}=[draw,doubled,shape=NEbox,inner sep=2pt,minimum height=6mm,fill=white,minimum width=19.5mm]
\tikzstyle{muxbox}=[draw,shape=rectangle,minimum height=3mm,minimum width=3mm,fill=white]
\tikzstyle{dmuxbox}=[muxbox,doubled]
\tikzstyle{box}=[draw,shape=rectangle,inner sep=2pt,minimum height=6mm,minimum width=6mm,fill=white]
\tikzstyle{dbox}=[draw,doubled,shape=rectangle,inner sep=2pt,minimum height=6mm,minimum width=6mm,fill=white]
\tikzstyle{dmap}=[draw,doubled,shape=NEbox,inner sep=2pt,minimum height=6mm,fill=white]
\tikzstyle{dmapdag}=[draw,doubled,shape=SEbox,inner sep=2pt,minimum height=6mm,fill=white]
\tikzstyle{dmapadj}=[draw,doubled,shape=SEbox,inner sep=2pt,minimum height=6mm,fill=white]
\tikzstyle{dmaptrans}=[draw,doubled,shape=SWbox,inner sep=2pt,minimum height=6mm,fill=white]
\tikzstyle{dmapconj}=[draw,doubled,shape=NWbox,inner sep=2pt,minimum height=6mm,fill=white]
\tikzstyle{ddmap}=[draw,doubled,dashed,shape=NEbox,inner sep=2pt,minimum height=6mm,fill=white]
\tikzstyle{ddmapdag}=[draw,doubled,dashed,shape=SEbox,inner sep=2pt,minimum height=6mm,fill=white]
\tikzstyle{ddmapadj}=[draw,doubled,dashed,shape=SEbox,inner sep=2pt,minimum height=6mm,fill=white]
\tikzstyle{ddmaptrans}=[draw,doubled,dashed,shape=SWbox,inner sep=2pt,minimum height=6mm,fill=white]
\tikzstyle{ddmapconj}=[draw,doubled,dashed,shape=NWbox,inner sep=2pt,minimum height=6mm,fill=white]
\tikzstyle{smap}=[draw,shape=sNEbox,fill=white]
\tikzstyle{smapdag}=[draw,shape=sSEbox,fill=white]
\tikzstyle{smapadj}=[draw,shape=sSEbox,fill=white]
\tikzstyle{smaptrans}=[draw,shape=sSWbox,fill=white]
\tikzstyle{smapconj}=[draw,shape=sNWbox,fill=white]
\tikzstyle{dsmap}=[draw,dashed,shape=sNEbox,fill=white]
\tikzstyle{dsmapdag}=[draw,dashed,shape=sSEbox,fill=white]
\tikzstyle{dsmaptrans}=[draw,dashed,shape=sSWbox,fill=white]
\tikzstyle{dsmapconj}=[draw,dashed,shape=sNWbox,fill=white]
\tikzstyle{mmap}=[draw,shape=mNEbox]
\tikzstyle{mmapdag}=[draw,shape=mSEbox]
\tikzstyle{mmaptrans}=[draw,shape=mSWbox]
\tikzstyle{mmapconj}=[draw,shape=mNWbox]
\tikzstyle{mmapgray}=[draw,fill=gray!40!white,shape=mNEbox]
\tikzstyle{smapgray}=[draw,fill=gray!40!white,shape=sNEbox]
\pgfmathsetmacro{\pgf@shorten@left}{\pgfkeysvalueof{/tikz/shorten left}}
\pgfmathsetmacro{\pgf@shorten@right}{\pgfkeysvalueof{/tikz/shorten right}}
\pgfmathsetmacro{\pgf@shorten@left}{\pgfkeysvalueof{/tikz/shorten left}}
\pgfmathsetmacro{\pgf@shorten@right}{\pgfkeysvalueof{/tikz/shorten right}}
\tikzstyle{kpoint common}=[draw,fill=white,inner sep=1pt,minimum height=4mm]
\tikzstyle{kpoint sc}=[shape=cornerpoint,kpoint common]
\tikzstyle{kpoint adjoint sc}=[shape=cornercopoint,kpoint common]
\tikzstyle{kpoint}=[shape=cornerpoint,shorten left=5pt,kpoint common]
\tikzstyle{kpoint adjoint}=[shape=cornercopoint,shorten left=5pt,kpoint common]
\tikzstyle{kpoint conjugate}=[shape=cornerpoint,shorten right=5pt,kpoint common]
\tikzstyle{kpoint transpose}=[shape=cornercopoint,shorten right=5pt,kpoint common]
\tikzstyle{kpoint symm}=[shape=cornerpoint,shorten left=5pt,shorten right=5pt,kpoint common]
\tikzstyle{wide kpoint sc}=[shape=cornerpoint,kpoint common, minimum width=1 cm]
\tikzstyle{wide kpointdag sc}=[shape=cornercopoint,kpoint common, minimum width=1 cm]
\tikzstyle{black kpoint}=[shape=cornerpoint,shorten left=5pt,kpoint common,fill=black,font=\color{white}]
\tikzstyle{black kpoint sm}=[shape=cornerpoint,shorten left=5pt,kpoint common,fill=black,font=\color{white},scale=0.75]
\tikzstyle{black kpoint adjoint}=[shape=cornercopoint,shorten left=5pt,kpoint common,fill=black,font=\color{white}]
\tikzstyle{black kpointadj}=[shape=cornercopoint,shorten left=5pt,kpoint common,fill=black,font=\color{white}]
\tikzstyle{black kpointadj sm}=[shape=cornercopoint,shorten left=5pt,kpoint common,fill=black,font=\color{white},scale=0.75]
\tikzstyle{black dkpoint}=[shape=cornerpoint,shorten left=5pt,kpoint common,fill=black, doubled,font=\color{white}]
\tikzstyle{black dkpoint adjoint}=[shape=cornercopoint,shorten left=5pt,kpoint common,fill=black, doubled,font=\color{white}]
\tikzstyle{black dkpointadj}=[shape=cornercopoint,shorten left=5pt,kpoint common,fill=black, doubled,font=\color{white}]
\tikzstyle{black dkpoint sm}=[shape=cornerpoint,shorten left=5pt,kpoint common,fill=black, doubled,font=\color{white},scale=0.75]
\tikzstyle{black dkpointadj sm}=[shape=cornercopoint,shorten left=5pt,kpoint common,fill=black, doubled,font=\color{white},scale=0.75]
\tikzstyle{kpointdag}=[kpoint adjoint]
\tikzstyle{kpointadj}=[kpoint adjoint]
\tikzstyle{kpointconj}=[kpoint conjugate]
\tikzstyle{kpointtrans}=[kpoint transpose]
\tikzstyle{big kpoint}=[kpoint, minimum width=1.2 cm, minimum height=8mm, inner sep=4pt, text depth=3mm]
\tikzstyle{wide kpoint}=[kpoint, minimum width=1 cm, inner sep=2pt]
\tikzstyle{wide kpointdag}=[kpointdag, minimum width=1 cm, inner sep=2pt]
\tikzstyle{wide kpointconj}=[kpointconj, minimum width=1 cm, inner sep=2pt]
\tikzstyle{wide kpointtrans}=[kpointtrans, minimum width=1 cm, inner sep=2pt]
\tikzstyle{wider kpoint}=[kpoint, minimum width=1.25 cm, inner sep=2pt]
\tikzstyle{wider kpointdag}=[kpointdag, minimum width=1.25 cm, inner sep=2pt]
\tikzstyle{wider kpointconj}=[kpointconj, minimum width=1.25 cm, inner sep=2pt]
\tikzstyle{wider kpointtrans}=[kpointtrans, minimum width=1.25 cm, inner sep=2pt]
\tikzstyle{gray kpoint}=[kpoint,fill=gray!50!white]
\tikzstyle{gray kpointdag}=[kpointdag,fill=gray!50!white]
\tikzstyle{gray kpointadj}=[kpointadj,fill=gray!50!white]
\tikzstyle{gray kpointconj}=[kpointconj,fill=gray!50!white]
\tikzstyle{gray kpointtrans}=[kpointtrans,fill=gray!50!white]
\tikzstyle{gray dkpoint}=[kpoint,fill=gray!50!white,doubled]
\tikzstyle{gray dkpointdag}=[kpointdag,fill=gray!50!white,doubled]
\tikzstyle{gray dkpointadj}=[kpointadj,fill=gray!50!white,doubled]
\tikzstyle{gray dkpointconj}=[kpointconj,fill=gray!50!white,doubled]
\tikzstyle{gray dkpointtrans}=[kpointtrans,fill=gray!50!white,doubled]
\tikzstyle{white label}=[draw,fill=white,rectangle,inner sep=0.7 mm]
\tikzstyle{gray label}=[draw,fill=gray!50!white,rectangle,inner sep=0.7 mm]
\tikzstyle{black label}=[draw,fill=black,rectangle,inner sep=0.7 mm]
\tikzstyle{dkpoint}=[kpoint,doubled]
\tikzstyle{wide dkpoint}=[wide kpoint,doubled]
\tikzstyle{dkpointdag}=[kpoint adjoint,doubled]
\tikzstyle{wide dkpointdag}=[wide kpointdag,doubled]
\tikzstyle{dkcopoint}=[kpoint adjoint,doubled]
\tikzstyle{dkpointadj}=[kpoint adjoint,doubled]
\tikzstyle{dkpointconj}=[kpoint conjugate,doubled]
\tikzstyle{dkpointtrans}=[kpoint transpose,doubled]
\tikzstyle{kscalar}=[kpoint common, shape=EBox, inner xsep=-1pt, inner ysep=3pt,font=\small]
\tikzstyle{kscalarconj}=[kpoint common, shape=WBox, inner xsep=-1pt, inner ysep=3pt,font=\small]
\tikzstyle{spekpoint}=[kpoint sc,minimum height=5mm,inner sep=3pt]
\tikzstyle{spekcopoint}=[kpoint adjoint sc,minimum height=5mm,inner sep=3pt]
\tikzstyle{dspekpoint}=[spekpoint,doubled]
\tikzstyle{dspekcopoint}=[spekcopoint,doubled]
 \tikzstyle{upground}=[circuit ee IEC,thick,ground,rotate=90,scale=2.5]
 \tikzstyle{downground}=[circuit ee IEC,thick,ground,rotate=-90,scale=2.5]
 \tikzstyle{bigground}=[regular polygon,regular polygon sides=3,draw=gray,scale=0.50,inner sep=-0.5pt,minimum width=10mm,fill=gray]
\tikzstyle{arrs}=[-latex,font=\small,auto]
\tikzstyle{arrow plain}=[arrs]
\tikzstyle{arrow dashed}=[dashed,arrs]
\tikzstyle{arrow bold}=[very thick,arrs]
\tikzstyle{arrow hide}=[draw=white!0,-]
\tikzstyle{arrow reverse}=[latex-]
\tikzstyle{cdnode}=[]
\newcommand{\calP}{\mathcal{P}}
\newcommand{\calS}{\mathcal{S}}
\newcommand{\caltP}{\widetilde{\mathcal{P}}}
\newcommand{\wtalpha}{\widetilde{\alpha}}
\newcommand{\WNC}{\mathbf{WNC}}
\newcommand{\SNC}{\mathbf{SNC}}
\newcommand{\VS}{\mathbf{VS}}
\newcommand{\ket}[1]{| #1 \rangle}
\newcommand{\inner}[2]{\langle #1, #2 \rangle}
\newcommand{\enperiod}{\; .}
\newcommand{\encomma}{\; ,}
\newtheorem{defn}{Definition}
\newtheorem{thm}{Theorem}
\newtheorem{lm}{Lemma}
\newtheorem{remark}{Remark}
\newtheorem{fact}{Fact}
\newtheorem{cor}{Corollary}
\begin{document}

\title{How to make unforgeable money in\\ generalised probabilistic theories}
\date{\today}
\author{John H. Selby}
\affiliation{Perimeter Institute for Theoretical Physics, Waterloo, Ontario, Canada, N2L 2Y5}
\email{jselby@perimeterinstitute.ca}
\author{Jamie Sikora}
\affiliation{Perimeter Institute for Theoretical Physics, Waterloo, Ontario, Canada, N2L 2Y5}
\email{jsikora@perimeterinstitute.ca}

\maketitle

\begin{abstract}
We discuss the possibility of creating money that is physically impossible to counterfeit.
Of course, ``physically impossible'' is dependent on the theory that is a faithful description of nature.
Currently there are several proposals for quantum money which have their security based on the validity of quantum mechanics.
In this work, we examine Wiesner's money scheme in the framework of generalised probabilistic theories. This framework is broad enough to allow for essentially any potential theory of nature, provided that it admits an operational description.
We prove that under a quantifiable version of the no-cloning theorem, one can create physical money which has an exponentially small chance of being counterfeited. Our proof relies on cone programming, a natural generalisation of semidefinite programming. Moreover, we discuss some of the difficulties that arise when considering non-quantum theories.
\end{abstract}

\section{Introduction}
Since the discovery of quantum physics, there has been an ongoing effort to understand the technological impact it may have.
For example, it has been used to develop new technologies through a better control and understanding of microscopic systems, and, moreover, we are still trying to understand all of the information-theoretic advantages.
That is, we strive to better understand how the `weirdness' of the theory can be exploited for practical purposes.
There are countless examples found in the studies of quantum computation, information processing, and cryptography, and more are being discovered every day.
In this work, we focus on the important cryptographic task of creating money which is \emph{physically unforgeable}.

The money we use in our day to day lives only has value because it is difficult to counterfeit.
If we could easily duplicate it in some way then it would not take long before people were indeed taking advantage of this fact.
Indeed, despite the best government efforts, it was estimated that around $3\%$ of certain coins in the UK, for example, were counterfeits.
As it stands it is a constant battle between those that design the coins and those that try to counterfeit them.

There are numerous protocols for creating quantum money, \cite{Wie83, aaronson2012quantum, gavinsky2012quantum, MVW13} to name a few.
In fact, the very first cryptographic task using quantum information was a money scheme in Wiesner's seminal paper~\cite{Wie83}.
The key idea behind Wiesner's protocol, and those that followed, is that quantum theory could promise security based on the impossibility of cloning an unknown quantum state, and not on any technological limitations.
In other words, security based on the laws of physics rather than the limited resources of the counterfeiters.

This however is not the strongest form of security that one could imagine: it is contingent on our current best guess regarding the underlying physics which describes the world.
A more reliable form of security would be independent of a specific physical theory, and instead, be based on primitive physical principles that we may expect to hold regardless of the ultimate theory of nature.
The framework of Generalised Probabilistic Theories (GPTs) provides an operational framework in which we can address such problems.
For example the possibility of key distribution \cite{barrett2005no} and impossibility of bit commitment~\cite{chiribella2010probabilistic,SS17} have been demonstrated for a wide range of physical theories.

In this paper we explore the possibility of unforgeable money in the GPT framework.
Not only does this offer the potential for a much stronger foundation on which to base cryptographic security, but, it allows us to gain insight into quantum protocols by highlighting the key features of quantum theory necessary for security.

To prove our main result, we rely on the use of cone programming, also called linear conic optimisation. Cone programming is a generalisation of semidefinite programming which is another class of optimisation problems which has seen many uses in quantum theory.
The generalisation of semidefinite programming to cone programming mirrors that of quantum theory to GPTs.
For this reason, cone programming is a natural tool to have handy when studying post-quantum theories.
Although cone programming is a well-studied area of optimisation theory, it has only had a small number of applications in quantum theory~\cite{GSU13, BCJRWY14, LP15, NST16, SW17} and in GPTs~\cite{fiorini2014generalized, JP17, SS17,bae2016structure,LPW17}.
We hope this work will inspire future applications of cone programming in the study of GPTs and solidify it as an indispensable mathematical tool.

\section{Unforgeable money: the idea}

The idea behind the first quantum scheme for unforgeable money is quite simple: if a banknote contains unknown physical states, then the no-cloning theorem proves that the money cannot be duplicated.
However, even within quantum theory there are some caveats.
Due to the uncertainty principle, one cannot ascertain the exact state of a quantum system.
Thus, one could imagine a perfect copy is not needed to counterfeit money, only the ability to cheat someone who might be testing if counterfeiting occurred (with a reasonable probability of success).

In Wiesner's original scheme~\cite{Wie83}, the bank randomly selects one of the following four qubits
$\ket{0}, \ket{1}, \ket{+}, \ket{-}$
and embeds it into the banknote.
When the holder of the banknote wants to verify the banknote is authentic, they tell the bank the serial number, then the bank looks up what the state should be, then measures to see if the state is intact.
Indeed, there is a less-than-perfect chance of creating two banknotes which will each pass this verification.
Intuitively, the more qubits the bank puts into the banknote, the harder it is to counterfeit.
This intuition was later proven to be true in~\cite{MVW13} through the use of semidefinite programming.

In this paper, we wish to see if something like the above holds in GPTs.
Of course, one needs to define what the physical states are, and how the bank verifies them.
Without going into detail yet, we just assume the bank embeds a physical state into the banknote, and independently verifies each copy individually.
The bank's verification must be a physical process, and we desire it to be secure against all physically-realisable counterfeiting machines.
Roughly speaking, let $C$ represent the bank's entire strategy of creating and verifying a banknote, and let $\calP$ be the set of all physical counterfeiting machines (not necessarily perfect ones).
Then we would like to design a money scheme such that the following quantity is as small as possible:
\begin{equation}
\sup \{ C(X) : X \in \calP \}
\end{equation}
where $C(X)$ denotes the probability that the two copies of the counterfeiter each pass the bank's verification procedure.

We shortly introduce the physics required to establish meaningful definitions of $C$ and $\calP$.

\section{Background: Generalised probabilistic theories}

We now introduce the mathematics of the framework for generalised probabilistic theories that we use in this paper (those familiar with this topic can skim this section for notation and  proceed to Section~\ref{GPTmoney}).
For simplicity we present the mathematical bare bones of the framework, but note that this can be derived from the basic operational ideas regarding the classical interface for a theory \cite{selby2017process,selbyReconstruction}.
The framework we present here is closely related to many other approaches to generalised probabilistic theories (e.g.,  \cite{hardy2001quantum,barrett2007information,Ludwig,davies1970operational,randall1970approach,Piron64,Mackey}), but, in particular to the work of \cite{chiribella2010probabilistic} and \cite{hardy2011reformulating} which also take a diagrammatic framework as a foundation.

The starting point for our framework is the notion of a process theory \cite{coecke2017picturing,coecke2015categorical,selby2017process}.
The key feature of a process theory is that it provides a diagrammatic representation of the theory.
There are two primitive components of a process theory, physical \emph{systems}, denoted by labeled wires, and physical \emph{processes} which can have input and output systems, denoted as a labeled box with input wires at the bottom and output wires at the top.
These processes can be `wired together' to form diagrams, for example,
\begin{equation}
\InputIfFileExists{Diagrams/genericDiagram.tikz}{}{\input{./figures/Diagrams/genericDiagram.tikz}} \enperiod
\end{equation}
Such diagrams are themselves valid processes in the theory; in this case with a \emph{composite} system  $AC$ as an input and $BE$ as an output.
Simply put, processes are \emph{closed} under being wired together.
Processes with no inputs (such as $c$ in the above diagram) are called \emph{states}, those with no outputs (such as $b$ above) {are} called \emph{effects}, and those with neither are simply called \emph{numbers}.
Numbers can be, for example, obtained when composing  a state with an effect, and so, as we are interested in probabilistic theories, these numbers are taken to be the non-negative reals, $\mathbb{R}^+$.

{Equality of processes can then be characterised in terms of these numbers, the idea being that if two processes give the same probabilities in all situations then the are the equal. This defines the notion of \emph{tomography}, which formally can be expressed as $f,g:A\to B$ are equal if and only if
\begin{equation}\label{eq:Tomog}
\InputIfFileExists{Diagrams/tomographyf.tikz}{}{\input{./figures/Diagrams/tomographyf.tikz}}\ = \ %
\InputIfFileExists{Diagrams/tomographyg.tikz}{}{\input{./figures/Diagrams/tomographyg.tikz}} \quad \forall C, s, e.
\end{equation}
}

We consider process theories that come with a way to discard (or simply ignore) systems.
For this purpose we introduce a discarding effect for each system $A$ denoted as:
\begin{equation}
\begin{tikzpicture}
	\begin{pgfonlayer}{nodelayer}
		\node [style=none] (0) at (0, -0.75) {};
		\node [style=none] (1) at (0, 0.25) {};
		\node [style=upground] (2) at (0, 0.5) {};
		\node [style={right label}] (3) at (0, -0.5) {$A$};
	\end{pgfonlayer}
	\begin{pgfonlayer}{edgelayer}
		\draw (1.center) to (0.center);
	\end{pgfonlayer}
\end{tikzpicture}} \enperiod
\end{equation}
Moreover we require that the composite of discarding effects is the discarding effect for the composite system
\begin{equation}
\InputIfFileExists{Diagrams/discardEffectComposite.tikz}{}{\input{./figures/Diagrams/discardEffectComposite.tikz}} \enperiod
\end{equation}
In particular, discarding ``nothing'', i.e., the trivial system, is just the number $1$.

These discarding effects then allow us to define a notion of \emph{causality} for processes  \cite{chiribella2010probabilistic,coecke2014terminality,kissinger2017equivalence}. Specifically, a process $f$ is causal if and only if
\begin{equation} \label{causal}
{%
\InputIfFileExists{Diagrams/causalProcess.tikz}{}{\input{./figures/Diagrams/causalProcess.tikz}}} \enperiod
\end{equation}
The reason for naming these `causal' is not immediately apparent, however it can be shown that restricting to the causal processes of a theory ensures compatibility with the causal structure of relativity \cite{kissinger2017equivalence}.
For example, this ensures that there is no signalling back in time \cite{chiribella2010probabilistic} or faster than the speed of light \cite{coecke2014terminality}.

On the other hand, restricting to just the causal processes is a step too far. Indeed, the only causal number is $1$ and so such processes only describe deterministic situations.
We want to discuss probabilistic scenarios where the numbers correspond to the probability that some event occurs.
To deal with this we therefore also work with \emph{subcausal} processes, that is, processes which can occur as some probabilistic `branch' of a deterministic process.
To formalise this branching structure we introduce diagrammatic sums, i.e., a sum of processes that distributes over diagrams:
\begin{equation}\label{eq:sum}
\InputIfFileExists{Diagrams/distributivity.tikz}{}{\input{./figures/Diagrams/distributivity.tikz}}
\end{equation}
where $\xi$ is shorthand notation for an arbitrary diagram
of the form
\begin{equation}
\InputIfFileExists{Diagrams/comb2.tikz}{}{\input{./figures/Diagrams/comb2.tikz}} \enperiod
\end{equation}
An important consequence of this sum is that it allows us to define a partial order for processes:
\begin{equation}\label{eq:partialOrder}
\InputIfFileExists{Diagrams/partialOrder.tikz}{}{\input{./figures/Diagrams/partialOrder.tikz}}
\end{equation}
where $z$ is another process in the theory. In particular, given this partial order we can define the subcausal processes as those for which the following holds:
\begin{equation}\label{eq:subcausal}
{%
\InputIfFileExists{Diagrams/subCausal1.tikz}{}{\input{./figures/Diagrams/subCausal1.tikz}}} \enperiod
\end{equation}
Importantly, the set of subcausal processes is closed under composition and the subcausal numbers are given by the interval $[0,1]$ so this faithfully captures the probabilistic part of the process theory as we required.

In the definition of this ordering we have assumed that $z$ is a {physical} process.
Later we consider the case when $z$ is not a {physical} process, but belongs to some set $K$, and denote such an ordering by `$\leq_K$'.
Note that such an ordering may not have any physical meaning, but it will have a mathematical meaning once we set up the mathematical structure behind the sets of different processes.
It is understood that the `absence of explicitly mentioning such a set' means that $K$ is the set of processes.
For example, in Eq.~\eqref{eq:partialOrder}, `$\leq$' is shorthand for $K$ being the set of processes from $A$ to $B$ and in Eq.~\eqref{eq:subcausal} it is shorthand for $K$ being the set of effects on $B$.

Given this structure one can observe that the set of processes $\{ f_i \}$ with a given (potentially composite) input $A$ and (potentially composite) output $B$ has a rich linear structure.
Specifically, using the sum defined in Eq.~\eqref{eq:sum}, we can form (non-negative) linear combinations as
\begin{equation}\label{eq:linear}
\InputIfFileExists{Diagrams/linearCombination.tikz}{}{\input{./figures/Diagrams/linearCombination.tikz}} \encomma
\; \text{for } r_i\in\mathbb{R}^+
\end{equation}
which is itself a valid process. These processes therefore form a convex cone, denoted as $K_A^B$, which naturally extends to a vector space $V_A^B$ which is spanned by the cone.
Note that this also holds for states and effects as they are just special instances of processes, i.e., we obtain a cone of states $K^A$ (which have no input) and a cone of effects $K_A$ (which have no output).
The causal processes form a convex set inside this cone, that is, if $f$ and $g$ are causal then it is simple to check that
\begin{equation}
\InputIfFileExists{Diagrams/convexCombination.tikz}{}{\input{./figures/Diagrams/convexCombination.tikz}}
\end{equation}
is causal as well for any $p \in [0,1]$.
This convex set can be characterised by those processes in $K_A^B$ that satisfy Eq.~\eqref{causal}.
Importantly for our result, it can then be shown that there are causal processes in the \emph{interior} of the cone $K_A^B$.
Additionally, we will make the assumption that the cones are \emph{closed} (as in practice a theory should be operationally indistinguishable from its closure) and \emph{finite-dimensional} (as it is impossible in practice to do an infinite number of experiments to characterise processes).

We illustrate these concepts with three specific examples of processes, below.

\smallskip \noindent
\textbf{States:} In general this can be an arbitrary convex cone, the causal states are given by the intersection of this cone with a single hyperplane and the subcausal states lie between this hyperplane and the zero vector (i.e. the point), as depicted below.
\begin{equation}
\includegraphics[clip, trim=1cm 13.5cm 1cm 5cm,width=0.7\textwidth]{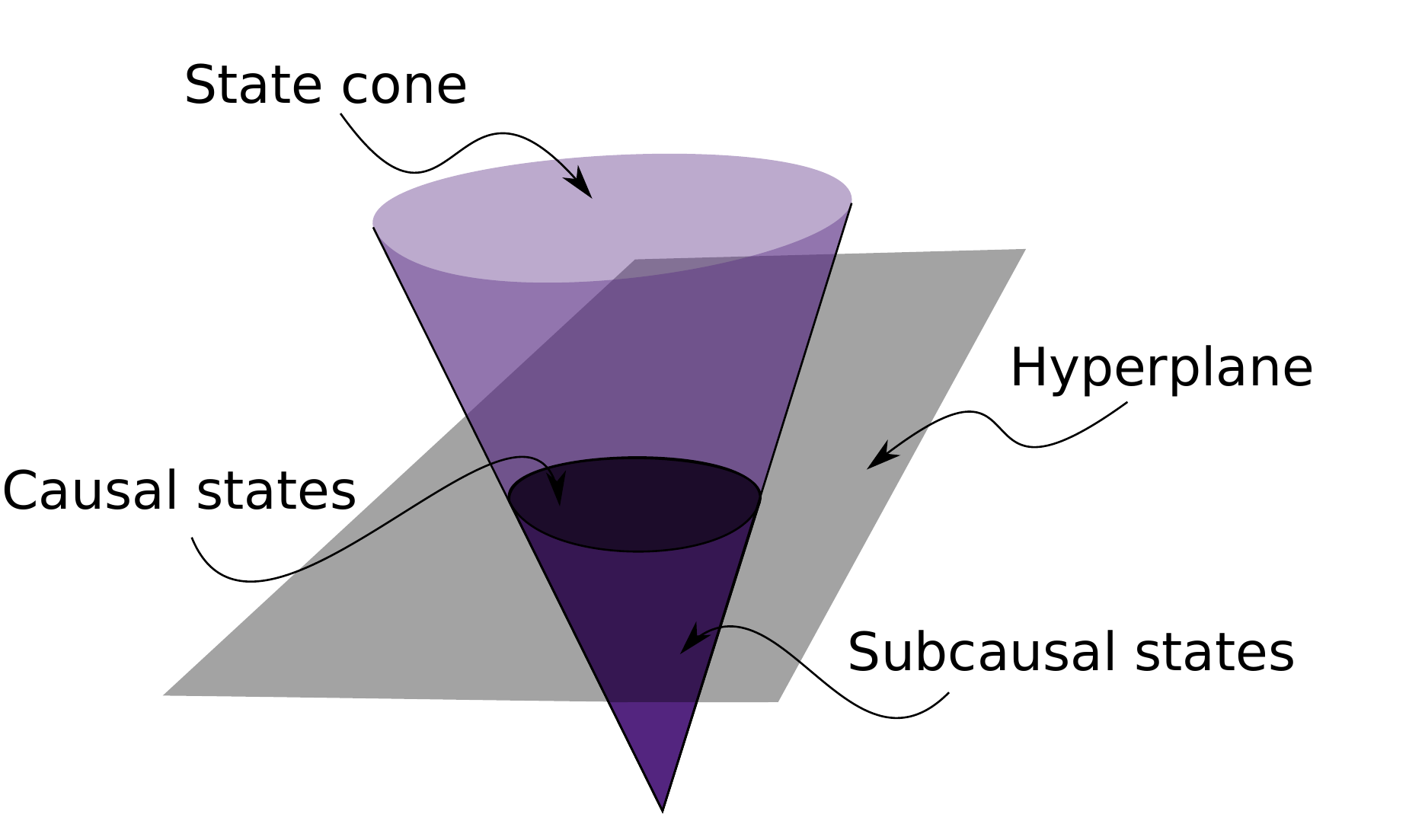}
\end{equation}

\smallskip \noindent
\textbf{Effects:}
We have only a single causal effect, the discarding effect itself. The subcausal effects lie between the discarding effect and the zero vector.
\begin{equation}
\includegraphics[clip, trim=1cm 17cm 1cm 4cm,width=0.7\textwidth]{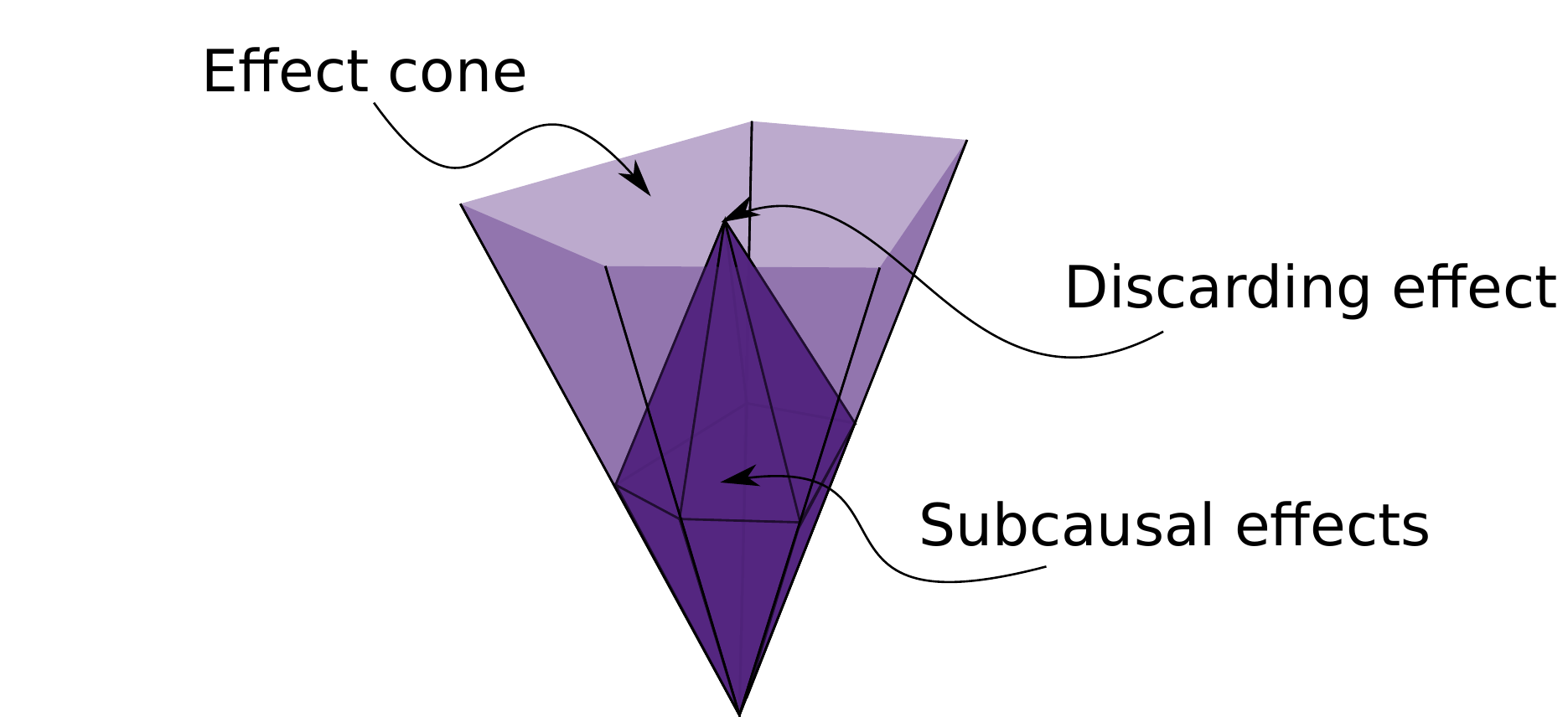}
\end{equation}

\smallskip \noindent
\textbf{Transformations:}
The set of causal processes is a more complicated set, but  importantly, it is still an \emph{affine} constraint, namely Eq.~\eqref{causal}, on the transformation cone. The subcausal transformations lie between this affine constraint and the zero vector.
\begin{equation}
\includegraphics[clip, trim=0cm 15.5cm 1cm 3.5cm,width=0.7\textwidth]{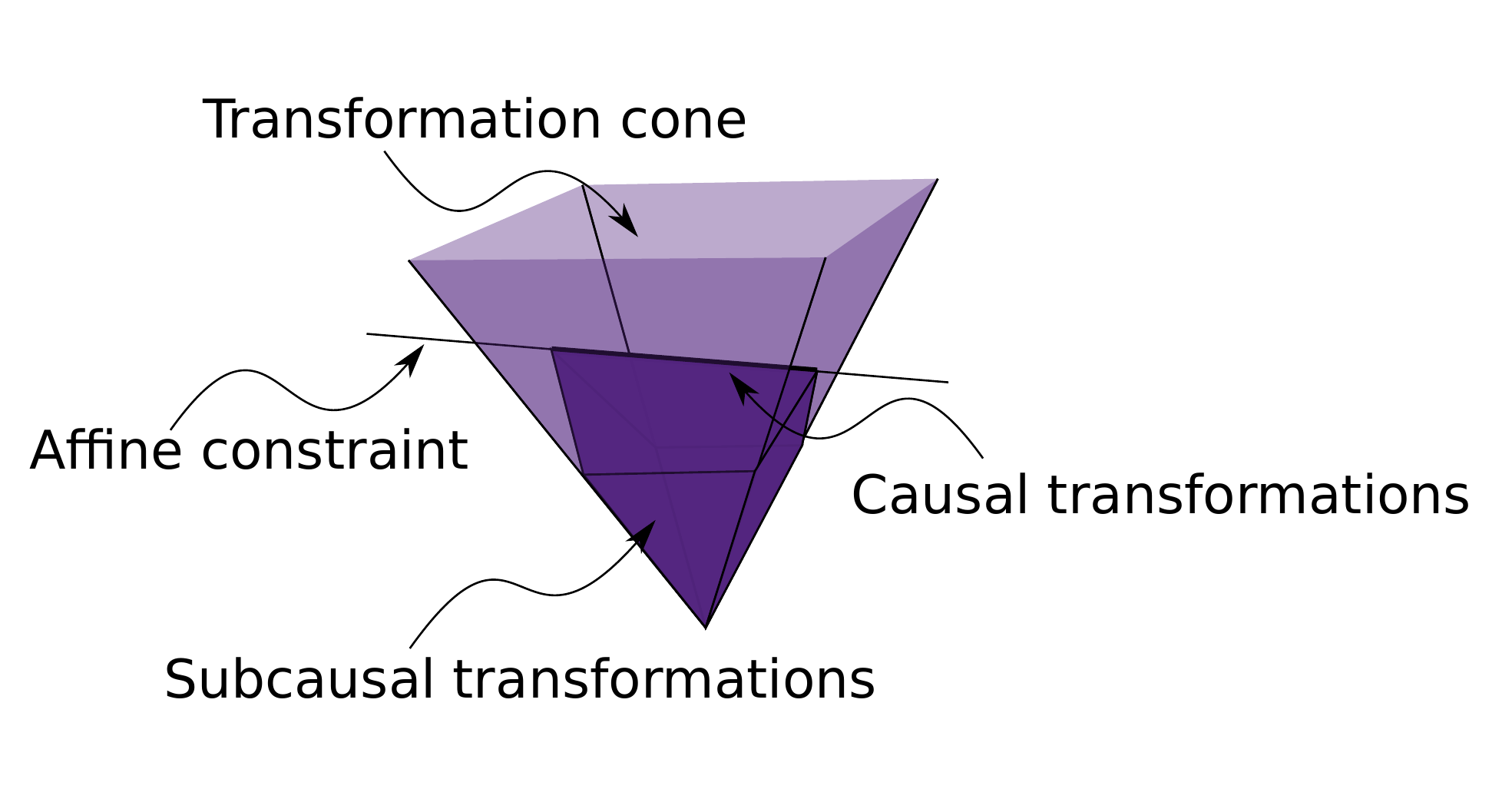}
\end{equation}

\newpage
\section{Background: Cone programming}
\label{convex}

{\color{black}
Cone programming is the study of optimization problems of the form
\begin{equation} \label{primal2}
\gamma = \sup \left\{ \inner{C}{X} : b - \phi(X) \in K, X \in K' \right\}
\end{equation}
where $K$ and $K'$ are finite-dimensional convex cones, $C$ and $b$ are vectors, $\phi$ is a linear function, and $X$ is the variable over which we wish to optimize.

Before continuing, we must introduce the notion of a \emph{dual cone}.
\begin{defn} \label{dualcone}
Given a convex cone $K$, we define its \emph{dual cone} as
\begin{equation}
K^* := \{ v : \inner{v}{s} \geq 0, \forall s\in K \}.
\end{equation}
\end{defn}

In this section, we study the cone program of the form below:
\begin{equation} \label{primal2}
\gamma = \sup \left\{ \inner{C}{X} : b - \phi(X) \in K_1^*, X \in K_2 \right\}
\end{equation}
where $K_1$ and $K_2$ are \emph{closed} convex cones.
The first step in understanding a cone program is to examine its dual cone program, which is another cone program related to the original. In this case, the dual is given as
\begin{equation} \label{dual2}
\beta = \inf \left\{ \inner{b}{y} : \phi^*(y) - C \in K_2^*, y \in K_1 \right\}
\end{equation}
where $\phi^*$ is the adjoint of $\phi$.

Suppose there exists an $X$ satisfying
\begin{equation} \label{primalslater}
X \in \mathrm{int}(K_2) \text{ and } b - \phi(X) \in \mathrm{int}(K_1^*)
\end{equation}
and a $y$ satisfying
\begin{equation} \label{dualslater}
y \in \mathrm{int}(K_1) \text{ and } \phi^*(y) - C \in \mathrm{int}(K_2^*).
\end{equation}
Then we have $\gamma = \beta$ and both problems attain an optimal solution.
This is known as \emph{strong duality}, the proof of which is beyond the scope of this section. We refer the interested reader to the book~\cite{BV} for a proof.

All of the results relying on cone programming in this paper only use strong duality. The dual yields a new perspective on studying a cone program while strong duality proves that it is very useful, especially when we want a new way to write the optimal value.
}
\section{GPT Money: the scheme and measuring its security} \label{GPTmoney}

We now describe the physical process of creating the banknote, its verification, and what counterfeiting machines a counterfeiter can use.

\medskip
\noindent
\textbf{Preparation:}
The bank selects a causal state $s_i$ of some system $A$ from the ensemble $\varepsilon_A = \{ (p_1, s_1), \ldots, (p_n, s_n) \}$ with $p_1, \ldots, p_n > 0$ and ${\sum_{i=1}^n p_i =1}$. Therefore, for all $i$ we have
\begin{equation}
\InputIfFileExists{Diagrams/causal_s_i.tikz}{}{\input{./figures/Diagrams/causal_s_i.tikz}}\ =\ 1 \enperiod
\end{equation}
The bank puts the state into the banknote and records its serial number as well as the state selected.

\medskip
\noindent
\textbf{Verification (of a single copy):}
The bank uses the effect $e_i$ to verify the state $s_i$ where $e_i$ is subcausal and always accepts $s_i$, i.e., for each $i$, we have
\begin{equation} \label{effect1}
\begin{tikzpicture}
	\begin{pgfonlayer}{nodelayer}
		\node [style=none] (0) at (0, 0) {};
		\node [style=none] (1) at (0, -0.5) {};
		\node [style=copoint] (2) at (0, 0.25) {$e_i$};
		\node [style=right label] (3) at (0, -0.5) {$A$};
	\end{pgfonlayer}
	\begin{pgfonlayer}{edgelayer}
		\draw (1.center) to (0.center);
	\end{pgfonlayer}
\end{tikzpicture}
}\ \leq\ %
\begin{tikzpicture}
	\begin{pgfonlayer}{nodelayer}
		\node [style=none] (0) at (0, -0.5) {};
		\node [style=none] (1) at (0, -0) {};
		\node [style=upground] (2) at (0, 0.25) {};
		\node [style={right label}] (3) at (0, -0.5) {$A$};
	\end{pgfonlayer}
	\begin{pgfonlayer}{edgelayer}
		\draw (1.center) to (0.center);
	\end{pgfonlayer}
\end{tikzpicture}
}
\end{equation}
and
\begin{equation} \label{effect2}
\InputIfFileExists{Diagrams/sharpNEW.tikz}{}{\input{./figures/Diagrams/sharpNEW.tikz}}\ =\ 1
\enperiod
\end{equation}
This way the verification always passes when the state is untampered.
Note that one could consider more general bank strategies, but, as we are interested in proving the impossibility of counterfeiting, allowing for more general strategies could only make the counterfeiters job more difficult.

\begin{defn}[Bank strategy]
For the ensemble $\varepsilon_A = \{ (p_1, s_1), \ldots, (p_n, s_n) \}$ and corresponding verification effects $\{ e_1, \ldots, e_n \}$, satisfying Eqs.~\eqref{effect1} and \eqref{effect2}, we call the set
\begin{equation}
\calS_A = \{ (p_1, s_1, e_1), \ldots, (p_n, s_n, e_n) \}
\end{equation}
a bank strategy.
Note that this fully describes what the bank does to prepare and verify a banknote.
\end{defn}

\medskip
\noindent
\textbf{Counterfeiting machines:}
We now flesh out the details of the set of physical counterfeiting machines $\calP$.
The counterfeiters strategy is simple to define, it is a physical process which takes in the state of system $A$ given by bank, and outputs some state of system $AA$ which is intended to be two copies of the original\footnote{{One could consider a more elaborate scenario in which the counterfeiter has access to $n$ banknotes and just aims to produce $m>n$ at the end, we however leave this and other such elaborations to future work.}}.
Mathematically,  it can be represented by some subcausal $\chi \in K_A^{AA}$, i.e.,
\begin{equation}
\InputIfFileExists{Diagrams/causalXLabeled.tikz}{}{\input{./figures/Diagrams/causalXLabeled.tikz}} \ \leq
 \ %
\begin{tikzpicture}
	\begin{pgfonlayer}{nodelayer}
		\node [style=none] (0) at (0, -0.75) {};
		\node [style=none] (1) at (0, 0.25) {};
		\node [style=upground] (2) at (0, 0.5) {};
		\node [style=right label] (3) at (0, -0.5) {$A$};
	\end{pgfonlayer}
	\begin{pgfonlayer}{edgelayer}
		\draw (1.center) to (0.center);
	\end{pgfonlayer}
\end{tikzpicture}
} \enperiod
\end{equation}
Thus, the set of physical counterfeiting machines is given as
\begin{equation}
\calP_A = \left\{\ \ %
\InputIfFileExists{Diagrams/causalXLabeled.tikz}{}{\input{./figures/Diagrams/causalXLabeled.tikz}} \ \leq
 \ %
}, \quad  %
\InputIfFileExists{Diagrams/counterfeiterChannel.tikz}{}{\input{./figures/Diagrams/counterfeiterChannel.tikz}}\in K_{A}^{AA} \ \right\}.
\end{equation}

\medskip
\noindent
\textbf{Security:}
Suppose the bank is independently given each output system and so independently tests each
of them with the relevant effect\footnote{{One could also consider more elaborate bank strategies, but our work here is a proof of principle and so we will leave these developments to future work.}}.
The overall bank verification process is therefore given by the following diagram:
\begin{equation} %
\InputIfFileExists{Diagrams/bankStrategyCLabeled.tikz}{}{\input{./figures/Diagrams/bankStrategyCLabeled.tikz}}\ :=\ %
\InputIfFileExists{Diagrams/bankStrategy.tikz}{}{\input{./figures/Diagrams/bankStrategy.tikz}}
\enperiod
\end{equation}
With these definitions in hand, the quantity we use to measure the security of the money scheme is given by
\begin{equation} \label{CP}
\alpha_A :=
\sup\left\{\  %
\InputIfFileExists{Diagrams/probSuccessC.tikz}{}{\input{./figures/Diagrams/probSuccessC.tikz}}\ \middle|\ \chi \in \calP_A \right\}.
\end{equation}
In the rest of the paper, we study this quantity.

\section{Designing secure GPT money schemes (and when it is not possible)}

We now study the optimization problem~\eqref{CP} with the hopes of finding bank strategies $\mathcal{S}_A$ which make $\alpha_A$ as small as possible.
We start with a very weak condition which we call Weak No-Counterfeiting which roughly states that a counterfeiter cannot cheat perfectly.
This is of course not useful for cryptographic purposes.
Therefore, we later discuss how to start with this condition and strengthen it by driving the maximum success probability of a counterfeiter down to a negligible amount.
We therefore prove that all GPTs fall into one of two categories: there is either perfect counterfeiting (this is the case for classical theory) or practical security (this is the case for quantum theory). Interestingly, there is no middle ground.

\subsection{Perfect counterfeiting}

In order to understand what we need to achieve security we first consider the converse problem: what does it mean for perfect counterfeiting to be possible?

This seems closely related to the \emph{clonability} or \emph{broadcastability} of the ensemble $\varepsilon_A = \{ (p_1, s_1), \ldots, (p_n, s_n) \}$. Specifically, if one could perfectly clone the states then this would immediately provide a perfect counterfeiting strategy. That is, given a channel $\Delta$ such that, for all $s_i$:
\begin{equation}%
\InputIfFileExists{Diagrams/perfectCloner.tikz}{}{\input{./figures/Diagrams/perfectCloner.tikz}}\ =\ %
\InputIfFileExists{Diagrams/perfectCloned.tikz}{}{\input{./figures/Diagrams/perfectCloned.tikz}}\end{equation}
then if the counterfeiter used this channel we would find that
\begin{equation}
\InputIfFileExists{Diagrams/clonerAsCounterfeiter.tikz}{}{\input{./figures/Diagrams/clonerAsCounterfeiter.tikz}}\ =\ %
\InputIfFileExists{Diagrams/clonerAsCounterfeiter2.tikz}{}{\input{./figures/Diagrams/clonerAsCounterfeiter2.tikz}}\ =\ %
\InputIfFileExists{Diagrams/clonerAsCounterfeiter3.tikz}{}{\input{./figures/Diagrams/clonerAsCounterfeiter3.tikz}}
\ =\ 1 \enperiod
\end{equation}
This implies $\alpha_A = 1$ and so counterfeiting can be achieved perfectly.
This is exactly what happens in the case of classical theory. Moreover, it was shown in \cite{barnum2007generalized} that the possibility of cloning arbitrary ensembles singles out classical theory from a wide class of GPTs\footnote{Namely, those satisfying the No-Restriction Hypothesis \cite{chiribella2010probabilistic} and Tomographic Locality \cite{hardy2001quantum}.}.

However, there are other ways in which one could find that perfect counterfeiting is possible. One way is if the bank strategy is too restricted. For example, if we take
\begin{equation}
}\ =\ %
} \encomma
\end{equation}
for all $i$, then the bank is learning nothing about the returned banknotes. Obviously, the counterfeiter would always pass the verification irrespective of what counterfeiting procedure they used.

Therefore, the task of finding a secure money scheme for the bank does not only require a non-cloneable ensemble $\varepsilon_A = \{ (p_1, s_1), \ldots, (p_n, s_n) \}$, but also a decent choice of each verification effect $e_i$. To this end, we now consider different security notions for the entire bank strategy $\mathcal{S}_A = \{ (p_1, s_1, e_1), \ldots, (p_n, s_n, e_n) \}$.

\subsection{Weak security}

We start with a definition.

\medskip
\begin{defn}[Weak No-Counterfeiting ($\WNC$)]
A theory satisfies $\WNC$ if there exists a bank strategy $\mathcal{S}_A = \{ (p_1, s_1, e_1), \ldots, (p_n, s_n, e_n) \}$ such that $\alpha_A < 1$.
\end{defn}
\medskip

This is mathematically the weakest condition that can be used to rule out perfect counterfeiting. However, it is not the most physically motivated assumption. We  therefore address how it can be derived from an assumption regarding the bank strategy.

\medskip
\begin{defn}[Verification Sharpness ($\VS$)]
The bank strategy $$\mathcal{S}_A = \{ (p_1, s_1, e_1), \ldots, (p_n, s_n, e_n) \}$$ satisfies $\VS$ if and only if the effects uniquely pick out the states in the ensemble. I.e., for all $i$, we have:
\begin{equation}
\InputIfFileExists{Diagrams/verifSharp.tikz}{}{\input{./figures/Diagrams/verifSharp.tikz}}\ =\ 1\quad \iff\quad %
\InputIfFileExists{Diagrams/verifSharp2.tikz}{}{\input{./figures/Diagrams/verifSharp2.tikz}} \enperiod
\end{equation}
{where $s$ is an arbitrary subcausal state.}
\end{defn}

Verification Sharpness is defined to rule out the case discussed in the previous subsection where the bank's measurements were too restricted, e.g., each $e_i$ is just the discarding effect.
The effects in the Verification Sharpness condition are idealised in the sense that any deviation from the honest state will be caught by the bank with non-zero probability.
In quantum theory, if the states are pure, then the effect can be chosen to be the rank-$1$ projection onto that state. Thus, in quantum theory, $\VS$ holds when the states in the ensemble are pure.

\begin{defn}[Broadcasting Map]
A broadcasting map, $B$, for a set of states $\{s_i\}$ is any subcausal map satisfying:
\begin{equation}\label{eq:broadcast}
\InputIfFileExists{Diagrams/broadcast1.tikz}{}{\input{./figures/Diagrams/broadcast1.tikz}} \encomma \quad \forall i \enperiod
\end{equation}
\end{defn}

We now show that under the $\VS$ condition, a violation of $\WNC$ is equivalent to being able to broadcast the ensemble.

\begin{lm}\label{lem:WNC}
Suppose a bank strategy $\mathcal{S}_A = \{ (p_1, s_1, e_1), \ldots, (p_n, s_n, e_n) \}$ satisfies $\VS$. Then perfect counterfeiting is equivalent to the broadcastability of the states $\{ s_1, \ldots, s_n \}$.
\end{lm}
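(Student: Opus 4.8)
The plan is to prove the two implications separately. One direction is immediate from the discussion preceding the lemma: if the states $\{s_1,\dots,s_n\}$ can be broadcast, then there is a subcausal map $B\in K_A^{AA}$ with $\tikzfig{Diagrams/broadcast1}$ holding for each $i$, i.e.\ marginalising either output wire of $B\circ s_i$ returns $s_i$. I would use this $B$ as the counterfeiting channel $\chi$; applying each verification effect $e_i$ to the two output copies and using Eq.~\eqref{effect2} (each $e_i$ accepts $s_i$ with certainty) together with the discarding/marginal property of $B$ gives that both copies pass, so $C(\chi)=1$ and hence $\alpha_A=1$. Checking subcausality of $\chi=B$ is routine since $B$ is subcausal by definition.

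For the converse, suppose $\alpha_A=1$, so there is a (possibly a supremum of) subcausal $\chi\in K_A^{AA}$ achieving success probability arbitrarily close to $1$; since the cones are closed and finite-dimensional and the objective is linear over a compact slice, I would first argue the supremum is attained, giving a single $\chi$ with $C(\chi)=1$. Expanding $C(\chi)=1$ means $\sum_i p_i\,(e_i\otimes e_i)(\chi\circ s_i)=1$. Each term is a subcausal number in $[0,1]$, each $p_i>0$ and $\sum_i p_i=1$, and the overall sub-normalisation forces each term to equal $1$: that is, $(e_i\otimes e_i)(\chi\circ s_i)=1$ for every $i$. Now I invoke $\VS$: the state obtained on the first output wire of $\chi\circ s_i$ (after discarding the second) is some subcausal state $s$ which satisfies $e_i(s)=1$ — this needs the fact that discarding one copy of $e_i\otimes e_i$ applied to a state that is already accepted with certainty on both copies still leaves the other copy accepted with certainty, which follows from causality of the discarding effect and the ordering $e_i\le \,\discard$. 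By $\VS$, $e_i(s)=1$ forces $s=s_i$ (the causal state picked out by $e_i$), and similarly for the second output wire. Hence the marginals of $\chi\circ s_i$ on each wire are exactly $s_i$, i.e.\ $\chi$ is a broadcasting map for $\{s_i\}$, establishing broadcastability.

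The main obstacle I anticipate is the step that passes from ``both copies jointly accepted with certainty'' to ``each marginal equals $s_i$''. Getting there requires care: from $(e_i\otimes e_i)(\chi\circ s_i)=1$ one must extract that discarding the second wire and then applying $e_i$ to the first already yields $1$, which uses that $e_i\le\,\discard$ implies $(e_i\otimes e_i)\le(e_i\otimes\discard)\le(\discard\otimes\discard)$ and that the outermost quantity is $1$ by causality of $s_i$ (Eq.~\eqref{effect2} of the preparation) — so the intermediate quantity $(e_i\otimes\discard)(\chi\circ s_i)$ is squeezed to $1$; then one reads off that the first-wire marginal $s$ of $\chi\circ s_i$ obeys $e_i(s)=1$. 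The other technical point is justifying attainment of the supremum, which I would handle with the closedness and finite-dimensionality assumptions on the cones stated in the background section (the feasible set of subcausal $\chi$ is compact and $C$ is linear, hence continuous). Everything else — the diagrammatic bookkeeping and the use of $\VS$ — is straightforward.
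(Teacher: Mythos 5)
You follow essentially the same route as the paper: attainment of the optimum, the convexity argument forcing $(e_i\otimes e_i)$ applied to $\chi\circ s_i$ to equal $1$ for every $i$, the squeeze $(e_i\otimes e_i)\le(e_i\otimes u)\le(u\otimes u)$ (writing $u$ for the discarding effect), and two applications of $\VS$ to identify both marginals of $\chi\circ s_i$ with $s_i$; that direction is sound. The one step you undersell is the direction you call ``immediate''. A broadcasting map $B$ only guarantees that each \emph{marginal} of $B\circ s_i$ equals $s_i$ --- it need not output the product state $s_i\otimes s_i$ as a cloner would --- so $(e_i\otimes e_i)(B\circ s_i)=1$ does not follow just from $e_i(s_i)=1$ applied to each wire separately. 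This is exactly where the paper spends the second half of its appendix proof: decompose $u=e_i+E_i$ with $E_i$ subcausal (possible by \eqref{effect1}); from $(e_i\otimes u)(B\circ s_i)=e_i(s_i)=1$ and $(u\otimes u)(B\circ s_i)\le 1$ deduce $(E_i\otimes u)(B\circ s_i)=0$, hence $0\le(E_i\otimes e_i)(B\circ s_i)\le(E_i\otimes u)(B\circ s_i)=0$, and finally $(e_i\otimes e_i)(B\circ s_i)=(u\otimes e_i)(B\circ s_i)-(E_i\otimes e_i)(B\circ s_i)=1$. This is the same positivity/sub-normalisation mechanism as your squeeze in the converse direction, so the fix is routine, but as written the claim of immediacy hides the only nontrivial content of that implication.
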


{We postpone a proof to Appendix~\ref{appproof} since it follows easily from (independent) analysis later in the paper.}

In \cite{barnum2007generalized} it was shown that broadcastability (or the special case of clonability) of an ensemble implies it must be classical {(at least in their framework which assumes the No-Restriction Hypothesis \cite{chiribella2010probabilistic} and Tomographic Locality \cite{hardy2001quantum})}. Hence, we obtain the following corollary by restricting our consideration to the class of GPTs that they consider.

{\begin{cor}\label{cor:1}
In any non-classical GPT satisfying the No-Restriction Hypothesis \cite{chiribella2010probabilistic} and Tomographic Locality \cite{hardy2001quantum}, $\VS$ implies that the strategy must satisfy $\WNC$.
\end{cor}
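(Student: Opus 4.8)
The plan is to prove the contrapositive, using Lemma~\ref{lem:WNC} to convert a failure of $\WNC$ into broadcastability of the ensemble and then invoking the generalised no-broadcasting theorem of~\cite{barnum2007generalized} to reach a contradiction with non-classicality. Concretely: fix a GPT satisfying the No-Restriction Hypothesis~\cite{chiribella2010probabilistic} and Tomographic Locality~\cite{hardy2001quantum}, let $\calS_A=\{(p_1,s_1,e_1),\dots,(p_n,s_n,e_n)\}$ be a bank strategy satisfying $\VS$, and suppose for contradiction that $\WNC$ fails for it, i.e.\ $\alpha_A=1$ in~\eqref{CP}. Since $\calS_A$ satisfies $\VS$, Lemma~\ref{lem:WNC} applies and tells us that perfect counterfeiting is equivalent to broadcastability of $\{s_1,\dots,s_n\}$; hence there is a subcausal broadcasting map $B$ satisfying~\eqref{eq:broadcast} for every $i$.

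The remaining step is to turn broadcastability of this particular ensemble into a statement that contradicts non-classicality. The key observation is that the two conditions packaged in~\eqref{eq:broadcast} (marginalising either output of $B$ returns the input state) are \emph{linear} in the input, so $B$ broadcasts every state in the linear span of $\{s_1,\dots,s_n\}$ inside $V^A$; taking the ensemble to be tomographically complete, as one expects of a reasonable bank strategy, $B$ then broadcasts \emph{all} causal states of $A$. By the generalised no-broadcasting theorem of~\cite{barnum2007generalized} — whose hypotheses are exactly the No-Restriction Hypothesis and Tomographic Locality we have assumed — the existence of such a universal broadcasting map forces the state space of $A$ to be a simplex, i.e.\ the theory is classical on $A$, contradicting the standing assumption. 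Hence $\alpha_A<1$, so the strategy (and therefore the theory) satisfies $\WNC$.

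I expect the only genuine obstacle to be this bridging step, i.e.\ correctly matching the hypothesis supplied by Lemma~\ref{lem:WNC} (broadcastability of one fixed ensemble) to the hypothesis under which~\cite{barnum2007generalized} rules out non-classicality. There are two natural routes: either build in tomographic completeness of the ensemble, so that the single broadcasting map becomes universal by linearity as sketched above, or appeal to the finer ensemble-level form of the no-broadcasting theorem (a set of states is broadcastable iff it is jointly classically distinguishable) together with the fact that on a genuinely non-simplicial system one can, using the No-Restriction Hypothesis, exhibit a $\VS$ bank strategy whose ensemble is \emph{not} of that form. Once this interface is pinned down the argument is short; the contraposition, the appeal to Lemma~\ref{lem:WNC}, and the linearity of~\eqref{eq:broadcast} are all routine.
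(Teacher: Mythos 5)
Your proposal follows the paper's own route exactly: the published argument is nothing more than the combination of Lemma~\ref{lem:WNC} with the generalised no-broadcasting theorem of \cite{barnum2007generalized}, restricted to theories satisfying the No-Restriction Hypothesis and Tomographic Locality, which is precisely your contrapositive. The ``bridging step'' you flag --- matching broadcastability of the one fixed ensemble supplied by Lemma~\ref{lem:WNC} to the hypotheses under which \cite{barnum2007generalized} forces classicality --- is a genuine looseness, but the paper does not resolve it either (it simply asserts that broadcastability of an ensemble implies the ensemble is classical and leaves the interface with ``non-classical GPT'' implicit), so your two candidate repairs are, if anything, more careful than the original.
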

}

Clearly, the $\WNC$ condition is not good enough on its own to have secure money as the counterfeiter could still cheat with very large probability.
In the next subsection, we introduce a variant which is more meaningful for cryptographic security.
For this purpose, we prove that bank strategies satisfying $\WNC$ can be modified such that the states \emph{span} the vector space $V^A$.

\begin{defn}
A bank strategy $\mathcal{S}_A = \{ (p_1, s_1, e_1), \ldots, (p_n, s_n, e_n) \}$ is said to be spanning if $\{ s_1, \ldots, s_n \}$ span the vector space $V^A$.
\end{defn}

We have the following lemma.

\begin{lm} \label{lemma}
If $\WNC$ holds, then there is a spanning bank strategy $\calS_A$ such that $\alpha_A < 1$.
\end{lm}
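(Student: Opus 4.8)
The plan is to start from a bank strategy $\mathcal{S}_A = \{(p_1,s_1,e_1),\ldots,(p_n,s_n,e_n)\}$ witnessing $\WNC$, i.e.\ with $\alpha_A < 1$, and to enlarge the ensemble by adjoining extra triples $(p_j, s_j, e_j)$ so that the states span $V^A$, while keeping the counterfeiting probability strictly below $1$. The key observation is that adding states to the ensemble can only help the bank: intuitively, a counterfeiting machine $\chi$ now has to fool the verification on a \emph{larger} family of inputs, so the supremum defining $\alpha_A$ over the enlarged strategy cannot exceed the old one by much — and in fact we will arrange it to stay below $1$.

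Concretely, first I would pick any finite set of causal states $\{t_1,\ldots,t_m\}$ which, together with $\{s_1,\ldots,s_n\}$, spans $V^A$; such a finite spanning set exists because the cone $K^A$ is finite-dimensional and spans $V^A$. For each $t_k$ choose \emph{some} verification effect $f_k$ satisfying Eqs.~\eqref{effect1} and \eqref{effect2} — for instance one can always take $f_k$ to be the discarding effect, which trivially accepts $t_k$ and is subcausal; this is a legitimate (if weak) choice and is all we need. Now form the new ensemble with the original states given weight, say, $\tfrac{1}{2}p_i$ and the new states $t_k$ each given weight $\tfrac{1}{2m}$, renormalising so the probabilities sum to $1$. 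Call the resulting spanning bank strategy $\mathcal{S}_A'$, with security parameter $\alpha_A'$.

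The main step is then to bound $\alpha_A'$. Writing out the success probability of a counterfeiter $\chi$ against $\mathcal{S}_A'$ as a convex combination, the contribution coming from the original block of states is at most $\tfrac12 \cdot (\text{success probability of }\chi\text{ against the original }\mathcal{S}_A)$ times an appropriate normalisation, and each individual term is at most $1$. Since the original strategy has $\alpha_A < 1$, there is a uniform gap: for \emph{every} $\chi\in\calP_A$ the weighted success probability against the original states is at most $\alpha_A$, strictly less than $1$, while the new-state block contributes at most its total weight. Combining, $\alpha_A' \le 1 - c(1-\alpha_A) < 1$ for the constant $c$ equal to the total renormalised weight of the original block. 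Hence $\mathcal{S}_A'$ is spanning and still satisfies $\alpha_A' < 1$, as required.

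The point I expect to require the most care is the bookkeeping in that last step: one must make sure the supremum over the (closed, as assumed in the excerpt) set $\calP_A$ of counterfeiting machines is genuinely bounded away from $1$ rather than merely strictly less than $1$ pointwise. This follows because $\alpha_A<1$ is itself a supremum over $\calP_A$, so the bound $\alpha_A' \le 1 - c(1-\alpha_A)$ holds term-by-term and then passes to the supremum with no compactness argument needed — but it is the place where a sloppy estimate would silently break the proof. Everything else (existence of a finite spanning set, legitimacy of the discarding effect as a verification effect, closure of $\calP_A$ under the enlarged diagram) is routine given the background already set up.
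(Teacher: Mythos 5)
Your proposal is correct and is essentially the paper's own argument: the paper likewise mixes the original $\WNC$-witnessing strategy (weight $1/2$) with a strategy built from a spanning set of causal states verified by the discarding effect (weight $1/2$), and bounds the new security parameter by $\tfrac{1}{2}(\alpha'_A + 1) < 1$ via exactly the term-by-term estimate you describe. The only cosmetic difference is that the paper takes a basis of causal states for the auxiliary block rather than a completion of the original ensemble to a spanning set, which changes nothing.
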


\begin{proof}
Suppose we have a bank strategy $S'_A$ with security parameter $\alpha'_A < 1$ (which exists since $\WNC$ holds.
Then given a basis of causal states
\footnote{Which exists since the vector space is spanned by the set of causal states.}
$\{b_j\}_{j=1}^m$ of $V^A$, we can construct the bank strategy
$S''_A = \{ (1/m, b_j, \detEff) \}$.
The security parameter for the strategy $S''_A$ is denoted $\alpha''_A$ (which clearly equals $1$).
Let $\mathcal{S}_A$ be the bank strategy which uses $\mathcal{S}_A'$ or $\mathcal{S}_A''$ chosen uniformly at random.
As $\mathcal{S}_A''$ is a spanning bank strategy, we see that $\mathcal{S}_A$ is as well.
The proof now follows since
\begin{equation}
\alpha_A \leq \frac{1}{2} \left( \alpha'_A + \alpha_A'' \right) < 1
\end{equation}
as $\alpha_A' < 1$ and $\alpha_A'' = 1$.
\end{proof}

Thus, for the rest of the paper, we may restrict our attention to spanning bank strategies when requiring $\WNC$ to hold.

\subsection{Practical security}

We start by defining a condition which allows for a practical level of security.

\begin{defn}[Strong No-Counterfeiting ($\SNC$)]
A theory satisfies $\SNC$ if for any $\delta > 0$ there exists a bank strategy $\mathcal{S}_A$ such that ${\alpha_A \leq \delta}$.
\end{defn}

Assuming $\SNC$, the bank can therefore choose a value for $\delta$ with which it is comfortable, then proceed to use the appropriate ensemble in the banknote and the corresponding effects in its verification.
Note also that there is no hope of doing better than this, i.e., we cannot take $\delta=0$ as there is always some probability that the counterfeiter could make a lucky guess and prepare a new note in exactly the right state.
In particular, if $p_i = \max \{ p_1, \ldots, p_n \}$, then the counterfeiter can always use the counterfeiting machine
\begin{equation} \label{trivstrat}
\InputIfFileExists{Diagrams/trivialCounterfeiterNEW.tikz}{}{\input{./figures/Diagrams/trivialCounterfeiterNEW.tikz}}
\end{equation}
which succeeds with probability at least $p_i > 0$.
({Even if a counterfeiter did not know the bank's strategy, they could use the strategy \eqref{trivstrat} but instead of $s_i$, use a state in the interior of the cone to show that $\alpha_A > 0$.})

In this subsection we assume that the theory satisfies $\WNC$ (i.e., there is a bank strategy in the theory which satisfies $\WNC$) and ask if this can be extended to a proof of $\SNC$.
Specifically we consider boosting the security by having multiple independent copies of a bank strategy on each banknote.
For example, let $\mathcal{S}_A$ and $\mathcal{S}_B$ be two bank strategies.
We now study the case if the bank were to use \emph{both} $\mathcal{S}_A$ and $\mathcal{S}_B$ by sampling a state from one, then the other independently, and including both sampled states on the banknote. Let $\mathcal{S}_{AB}$ be the new (product) bank strategy and let $\alpha_{AB}$ be the optimal probability that a counterfeiter can successfully cheat $\mathcal{S}_{AB}$.
It seems like it should be true that $\alpha_{AB} = \alpha_A \alpha_B$.
Indeed this is the case for quantum theory~\cite{MVW13}.
However, there are examples, even in classical theory, where similar tasks do not have such a product theorem (one example can be found in the study of nonlocal games~\cite{feige1992two}).
Therefore, such a result is not so forthcoming.

Indeed, the fact that $\alpha_{AB}$ may not equal $\alpha_A \alpha_B$ is a big difference between the setting of GPTs and quantum theory.
We could impose certain physical principles (each of which hold in quantum theory) in order to enforce $\alpha_{AB} = \alpha_A \alpha_B$, but we will show how to circumvent this problem entirely without the need to assume these extra physical principles.

Let us consider a counterfeiting strategy ${\chi_{AB}\in K_{AB}^{AABB}}$ as illustrated diagrammatically, below
\begin{equation}
\InputIfFileExists{Diagrams/chiAB.tikz}{}{\input{./figures/Diagrams/chiAB.tikz}} \enperiod
\end{equation}

Clearly it is possible to achieve\footnote{We did not prove attainment of an optimal solution in this work, but the same argument holds if one takes limits.} success probability $\alpha_A\alpha_B$ by the counterfeiter by simply doing his optimal procedure on $A$ and $B$ independently:
\begin{equation}
\InputIfFileExists{Diagrams/chiABopt.tikz}{}{\input{./figures/Diagrams/chiABopt.tikz}}
\enperiod
\end{equation}

Ideally we would like to show that they cannot do any better than this, but as mentioned above, this may not always be the case.

We now show that although $\alpha_{AB}$ might be greater than $\alpha_A \alpha_B$, it cannot be too much greater.
To this end, we relax $\alpha_A$ to a new quantity $\wtalpha_A > 0$ (defined in the next subsection) which satisfies the following two properties:
\begin{eqnarray}
& & \alpha_A < 1 \implies \wtalpha_A < 1 \text{ for all spanning bank strategies } \mathcal{S}_A; \label{P1} \\
& & \alpha_{A^{\otimes n}} \leq (\wtalpha_{A})^n \text{ for all (not necessarily spanning) bank strategies } \mathcal{S}_A. \label{P2}
\end{eqnarray}
We see that \eqref{P1} together with Lemma~\ref{lemma} says that if $\WNC$ holds for \emph{any} bank strategy, then this new quantity is bounded away from $1$ for some spanning bank strategy.
Combining this with \eqref{P2}, we have that the success probability of cheating the bank decreases exponentially using the $n$-fold repetition of this spanning bank strategy.  This is summarised in the following lemma.

\medskip
\begin{lm} \label{helpful}
If there exists a quantity $\wtalpha > 0$ satisfying Eqs.~\eqref{P1}~and~\eqref{P2}, then $\WNC$ implies $\SNC$.
\end{lm}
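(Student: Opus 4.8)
The plan is to chain Lemma~\ref{lemma} together with the two assumed properties \eqref{P1} and \eqref{P2}, and then take a limit. Assume $\WNC$ holds. The first step is to upgrade this to a \emph{spanning} bank strategy that still beats perfect counterfeiting: this is exactly what Lemma~\ref{lemma} provides, namely a spanning bank strategy $\mathcal{S}_A$ with $\alpha_A < 1$. The only reason this reduction matters is that \eqref{P1} is asserted only for spanning bank strategies, so passing to a spanning strategy is precisely what makes \eqref{P1} applicable.

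Next I would apply \eqref{P1} to this spanning strategy: from $\alpha_A < 1$ it yields $\wtalpha_A < 1$. Then I would invoke \eqref{P2}, which holds for arbitrary (not necessarily spanning) bank strategies, applied to the $n$-fold product strategy $\mathcal{S}_{A^{\otimes n}}$ obtained by independently sampling $n$ copies of $\mathcal{S}_A$ and placing all of them on the banknote; this gives $\alpha_{A^{\otimes n}} \leq (\wtalpha_A)^n$.

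Finally, since $0 < \wtalpha_A < 1$, the sequence $(\wtalpha_A)^n \to 0$. Hence, given any $\delta > 0$, one chooses $n$ large enough that $(\wtalpha_A)^n \leq \delta$, and then the bank strategy $\mathcal{S}_{A^{\otimes n}}$ satisfies $\alpha_{A^{\otimes n}} \leq \delta$. As $\delta > 0$ was arbitrary, this is exactly $\SNC$, completing the argument.

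There is essentially no obstacle at the level of this lemma itself — it is pure bookkeeping, because all the substance has been packaged into the hypotheses \eqref{P1} and \eqref{P2} and into Lemma~\ref{lemma}. The genuine difficulty is deferred to the construction of the relaxed quantity $\wtalpha_A$ (which I expect to come from a cone-programming relaxation of $\alpha_A$ together with strong duality) and, above all, to establishing the submultiplicativity property \eqref{P2}, which is the substitute for the missing exact product theorem $\alpha_{AB} = \alpha_A\alpha_B$. I would expect \eqref{P2} to be the crux: one wants a dual formulation of (a relaxation of) $\alpha_A$ whose optimal dual certificates can be tensored to produce a feasible dual certificate for the $n$-fold product, yielding the bound $(\wtalpha_A)^n$.
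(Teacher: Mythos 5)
Your argument is correct and is essentially identical to the paper's own justification (which appears as the paragraph immediately preceding the lemma rather than as a formal proof): apply Lemma~\ref{lemma} to obtain a spanning strategy with $\alpha_A<1$, use \eqref{P1} to get $\wtalpha_A<1$, use \eqref{P2} on the $n$-fold repetition to get $\alpha_{A^{\otimes n}}\leq(\wtalpha_A)^n\to 0$, and choose $n$ to beat any given $\delta>0$. Your closing remarks about where the real work lies --- the cone-programming relaxation defining $\wtalpha$ and the tensoring of dual certificates to prove \eqref{P2} --- also accurately anticipate the paper's subsequent Lemmas~\ref{lem4} and~\ref{product}.
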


In the following subsection we define a relaxation from $\alpha$ to $\wtalpha$ and demonstrate that it satisfies Eqs.~\eqref{P1}~and~\eqref{P2}. Thus the weakest form of security possible implies the promise of practical security.

\subsubsection{A helpful, possibly non-physical, quantity $\wtalpha$}

{Recall the definition of a dual cone (Definition~\ref{dualcone}).}
The dual cones ${K_A^B}^*$ do not have an immediate interpretation within the GPT. However, it is clear that a diagram of the form
\begin{equation}
\InputIfFileExists{Diagrams/test.tikz}{}{\input{./figures/Diagrams/test.tikz}}
\end{equation}
must be within ${K_A^B}^*$ as it necessarily must evaluate to a non-negative real number on any $f\in K_A^B$. In particular, this implies that $K^A \subseteq {K_A^*}$ and $K_A\subseteq {K^A}^*$, that is, the state cone lives inside the dual of the effect cone, and vice versa.

To define $\wtalpha$, we relax the set of physical counterfeiting machines $\calP$ to $\caltP$, where $\caltP$ is a set of ``counterfeiting machines'' which are possibly not physically realizable.
However, whilst a counterfeiter may not be able to physically perform $\tilde{\chi} \in \caltP$, it is nonetheless useful to consider.

Recall we have $\calP$ defined as
\begin{equation}
\calP_A = \left\{\ \ %
\InputIfFileExists{Diagrams/causalX.tikz}{}{\input{./figures/Diagrams/causalX.tikz}} \ \leq_{K_A} \ %
\begin{tikzpicture}
	\begin{pgfonlayer}{nodelayer}
		\node [style=none] (0) at (0, -0.75) {};
		\node [style=none] (1) at (0, 0.25) {};
		\node [style=upground] (2) at (0, 0.5) {};
	\end{pgfonlayer}
	\begin{pgfonlayer}{edgelayer}
		\draw (1.center) to (0.center);
	\end{pgfonlayer}
\end{tikzpicture}
}, \quad  %
\InputIfFileExists{Diagrams/cloner_jamie.tikz}{}{\input{./figures/Diagrams/cloner_jamie.tikz}}\in K_{A}^{AA} \ \ \right\}
\end{equation}
where we have now explicitly denoted that the set defining the ordering is the effect cone for $A$. This is the cone that we will extend for the relaxation. Specifically, we replace $K_A$ with ${K^A}^*$ which, as mentioned above, contains $K_A$. Therefore, we define the relaxation
\begin{equation}
\caltP_A = \left\{\ \ %
\InputIfFileExists{Diagrams/causalX.tikz}{}{\input{./figures/Diagrams/causalX.tikz}} \ \leq_{{K^A}^*} \ %
}, \quad  %
\InputIfFileExists{Diagrams/cloner_jamie.tikz}{}{\input{./figures/Diagrams/cloner_jamie.tikz}}\in K_{A}^{AA} \ \ \right\}
\end{equation}
which may contain non-physical counterfeiting machines. They must still, by assumption, be in the process cone. But, there is no guarantee that they are causal or even subcausal, therefore they could lead to obtaining `probabilities' greater than~$1$.

This is clearly a relaxation as for any GPT, we have  $K_A\subseteq {K^A}^*$.
However, certain GPTs satisfy a property known as the \emph{No-Restriction Hypothesis} \cite{chiribella2010probabilistic}, which states that $K_A = {K^A}^*$.
For such theories we have $\caltP_A=\calP_A$ and so the relaxation is trivial.
In particular this is the case for quantum theory.
Therefore, the proof of our main result greatly simplifies in the case of quantum theory and any other theory satisfying the No-Restriction Hypothesis.

Define the quantity $\wtalpha$ to be
\begin{equation} \label{primal}
\wtalpha_A = \sup \left\{\  %
\InputIfFileExists{Diagrams/probSuccessC_jamie.tikz}{}{\input{./figures/Diagrams/probSuccessC_jamie.tikz}}\ \middle| \ \ \tilde{\chi} \in \caltP_A \ \right\}
\end{equation}
which is defined to be optimizing the same quantity as $\alpha$, i.e., the same $C$ process.
Also, we have $0 < \alpha \leq \wtalpha$ since $\calP \subseteq \caltP$.
Here, we use tildes to denote something that may not be physical to make the distinction clear.

Below is the main technical result of this paper. The proof of this theorem relies crucially on the duality theory of cone programming. The interested reader is referred to the book~\cite{BV}.

\medskip
\begin{thm} \label{dualitythm}
For all bank strategies $\mathcal{S}_A$, we have
\begin{equation} \label{dual}
\wtalpha_A = \min \left\{\ \ %
\begin{tikzpicture}
	\begin{pgfonlayer}{nodelayer}
		\node [style=none] (0) at (0, -0.25) {};
		\node [style=none] (1) at (0, 0.25) {};
		\node [style=upground] (2) at (0, 0.5) {};
		\node [style=point] (3) at (0, -0.5) {$y$};
	\end{pgfonlayer}
	\begin{pgfonlayer}{edgelayer}
		\draw (1.center) to (0.center);
	\end{pgfonlayer}
\end{tikzpicture}
} \ \ \middle|\quad
\InputIfFileExists{Diagrams/bankStrategyYDef.tikz}{}{\input{./figures/Diagrams/bankStrategyYDef.tikz}}\ - \ %
\InputIfFileExists{Diagrams/combC.tikz}{}{\input{./figures/Diagrams/combC.tikz}} \in {K_{A}^{AA}}^*, \ y \in K^A \right\} \enperiod
\end{equation}
Moreover, \eqref{primal} and \eqref{dual} attain an optimal solution ({hence the use of ``min'' instead of ``inf'' above}). For notational simplicity, we will denote
\begin{equation} \label{Ydef}
\InputIfFileExists{Diagrams/bankStrategyYDef.tikz}{}{\input{./figures/Diagrams/bankStrategyYDef.tikz}} \ =: \  %
\InputIfFileExists{Diagrams/combTtilde.tikz}{}{\input{./figures/Diagrams/combTtilde.tikz}} \enperiod
\end{equation}
We call the original formulation \eqref{primal} the \emph{primal}, and the above formulation \eqref{dual} the \emph{dual}.
\end{thm}

{Before showing a proof, we need the lemma below (whose proof can be found in Appendix~\ref{AppA}).}

\begin{lm} \label{usefulLemma}
For $y \in \mathrm{int}(K^A)$, we have \;
\[%
\InputIfFileExists{Diagrams/bankStrategyYDef.tikz}{}{\input{./figures/Diagrams/bankStrategyYDef.tikz}} \in \mathrm{int}({K_A^{AA}}^*).\]
\end{lm}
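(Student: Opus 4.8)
The plan is to reduce the statement to the elementary fact that, for a proper (closed, pointed, spanning) convex cone $K$, a vector $v$ lies in $\mathrm{int}(K^{*})$ if and only if $\langle v,x\rangle>0$ for every nonzero $x\in K$, together with its dual: $s\in\mathrm{int}(K)$ if and only if $\langle w,s\rangle>0$ for every nonzero $w\in K^{*}$ (see \cite{BV}). Under the standing assumptions the cones $K_A^{AA}$, $K^A$ and $K_A$ are closed and finite-dimensional, they span their ambient vector spaces, and — as is used implicitly throughout the duality arguments — they are pointed, so they are proper and these characterisations apply. Unwinding the diagram, evaluating $\tikzfig{Diagrams/bankStrategyYDef}$ on a process $\chi\in K_A^{AA}$ returns the number obtained by plugging the state $y$ into the input of the $\chi$-slot and discarding its two outputs, i.e.\ $\mathrm{discard}_{AA}\circ\chi\circ y$. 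So it suffices to show this number is strictly positive for every nonzero $\chi\in K_A^{AA}$.

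Fix such a $\chi$ and let $f_\chi:=\mathrm{discard}_{AA}\circ\chi$, the effect on $A$ obtained by discarding the output; it lies in $K_A$ because composites of cone elements with discarding effects stay in the relevant cone. The number to be bounded is then the pairing $\langle f_\chi,y\rangle$. Since $y\in\mathrm{int}(K^A)$ and $f_\chi\in K_A\subseteq {K^A}^{*}$ (an inclusion already noted in the text), the interior characterisation of $\mathrm{int}(K^A)$ yields $\langle f_\chi,y\rangle>0$ as long as $f_\chi\neq0$. Thus the whole lemma reduces to the claim that discarding effects are \emph{faithful}: if $\mathrm{discard}_{AA}\circ\chi=0$ then $\chi=0$.

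Establishing faithfulness is the step I expect to require the most care, since it must be done without assuming tomographic locality, and I would argue it as follows. Suppose $\mathrm{discard}_{AA}\circ\chi=0$, and let $C$ be any ancilla and $\sigma$ any state of $AC$. By the composite-discarding property, $\mathrm{discard}_{AAC}\circ(\chi\otimes\mathrm{id}_C)=(\mathrm{discard}_{AA}\circ\chi)\otimes\mathrm{discard}_C=0$, hence $\mathrm{discard}_{AAC}\circ(\chi\otimes\mathrm{id}_C)\circ\sigma=0$. Now $(\chi\otimes\mathrm{id}_C)\circ\sigma$ is a state of $AAC$ annihilated by a discarding effect; but discarding effects lie in the interior of the effect cone and are therefore strictly positive on all nonzero states, so $(\chi\otimes\mathrm{id}_C)\circ\sigma=0$. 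As this holds for every $C$ and every $\sigma$, the tomography axiom~\eqref{eq:Tomog} forces $\chi=0$, contradicting $\chi\neq0$. Hence $f_\chi\neq0$, and combining this with the second paragraph completes the proof. The remaining points are routine bookkeeping — the properness of the cones (from closedness, finite-dimensionality and pointedness) and the identification of $\tikzfig{Diagrams/bankStrategyYDef}$ with the functional $\chi\mapsto\mathrm{discard}_{AA}\circ\chi\circ y$ — both immediate from the definitions.
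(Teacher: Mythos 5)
Your proof is correct and rests on exactly the same ingredients as the paper's: the characterisation of $\mathrm{int}(K^{*})$ by strict positivity on $K\setminus\{0\}$, the fact that the discarding effect lies in the interior of the effect cone, and the tomography axiom \eqref{eq:Tomog}. The only difference is organisational --- you factor the functional through $f_\chi=\mathrm{discard}_{AA}\circ\chi$ and isolate an explicit ``faithfulness of discarding'' step, whereas the paper propagates the vanishing of the pairing from $y$ to all states and from the discarding effect to all effects via convex decompositions of interior points --- and the two arguments are interchangeable.
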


{We now prove Theorem~\ref{dualitythm}.}

\proof\label{dualityproof}
We begin by transforming the optimization problem~\eqref{primal} into a statement about vectors in the finite-dimensional vector space $V_A^{AA}$.
We make the vector-diagram associations
\begin{equation} \label{associations}
\begin{tikzpicture}
	\begin{pgfonlayer}{nodelayer}
		\node [style=none] (0) at (0, -0) {$\tilde{\chi}$};
		\node [style=none] (1) at (-0.75, 0.5) {};
		\node [style=none] (2) at (0.75, 0.5) {};
		\node [style=none] (3) at (0.25, -0.5) {};
		\node [style=none] (4) at (-0.25, -0.5) {};
		\node [style=none] (5) at (-0.25, 0.5) {};
		\node [style=none] (6) at (-0.25, 1) {};
		\node [style=none] (7) at (0.25, 0.5) {};
		\node [style=none] (8) at (0.25, 1) {};
		\node [style=none] (9) at (0, -0.5) {};
		\node [style=none] (10) at (0, -1) {};
	\end{pgfonlayer}
	\begin{pgfonlayer}{edgelayer}
		\draw (1.center) to (2.center);
		\draw (2.center) to (3.center);
		\draw (3.center) to (4.center);
		\draw (4.center) to (1.center);
		\draw (6.center) to (5.center);
		\draw (8.center) to (7.center);
		\draw (9.center) to (10.center);
	\end{pgfonlayer}
\end{tikzpicture}
 \; \longleftrightarrow \; X,
\quad
\InputIfFileExists{Diagrams/combC.tikz}{}{\input{./figures/Diagrams/combC.tikz}} \longleftrightarrow C,
\quad
\InputIfFileExists{Diagrams/doubleDiscard.tikz}{}{\input{./figures/Diagrams/doubleDiscard.tikz}} \longleftrightarrow \phi
\quad
\text{ and }
\quad
\begin{tikzpicture}
	\begin{pgfonlayer}{nodelayer}
		\node [style=upground] (0) at (0, 0.25) {};
		\node [style=none] (1) at (0, -0) {};
		\node [style=none] (2) at (0, -0.75) {};
	\end{pgfonlayer}
	\begin{pgfonlayer}{edgelayer}
		\draw (1.center) to (2.center);
	\end{pgfonlayer}
\end{tikzpicture}} \longleftrightarrow b
\end{equation}
such that
\begin{equation}
\inner{C}{X} =
\begin{tikzpicture}
	\begin{pgfonlayer}{nodelayer}
		\node [style=none] (0) at (0, 0.5) {};
		\node [style=none] (1) at (0, 1) {};
		\node [style=none] (2) at (-0.25, -1) {};
		\node [style=none] (3) at (-0.5, 0.5) {};
		\node [style=none] (4) at (-0.5, 1) {};
		\node [style=none] (5) at (-0.25, -0.5) {};
		\node [style=none] (6) at (0.5, 1) {};
		\node [style=none] (7) at (-1.25, 1) {};
		\node [style=none] (8) at (-1.25, -1) {};
		\node [style=none] (9) at (0.5, -1) {};
		\node [style=none] (10) at (0.5, -1.5) {};
		\node [style=none] (11) at (-2.25, -1.5) {};
		\node [style=none] (12) at (-2.25, 1.5) {};
		\node [style=none] (13) at (0.5, 1.5) {};
		\node [style=none] (14) at (-1.75, -0) {$C$};
		\node [style=none] (15) at (-1, 0.5) {};
		\node [style=none] (16) at (0.5, 0.5) {};
		\node [style=none] (17) at (0, -0.5) {};
		\node [style=none] (18) at (-0.5, -0.5) {};
		\node [style=none] (19) at (-0.25, -0) {$\tilde{\chi}$};
		\node [style=none] (20) at (-0.5, 2) {};
		\node [style=none] (21) at (-0.5, -2) {};
	\end{pgfonlayer}
	\begin{pgfonlayer}{edgelayer}
		\draw (4.center) to (3.center);
		\draw (1.center) to (0.center);
		\draw (5.center) to (2.center);
		\draw (12.center) to (13.center);
		\draw (13.center) to (6.center);
		\draw (6.center) to (7.center);
		\draw (7.center) to (8.center);
		\draw (8.center) to (9.center);
		\draw (9.center) to (10.center);
		\draw (10.center) to (11.center);
		\draw (11.center) to (12.center);
		\draw (15.center) to (16.center);
		\draw (16.center) to (17.center);
		\draw (17.center) to (18.center);
		\draw (18.center) to (15.center);
	\end{pgfonlayer}
\end{tikzpicture}
\quad
\text{,}
\quad
b-\phi(X)\ =\ %
} - %
\InputIfFileExists{Diagrams/vector.tikz}{}{\input{./figures/Diagrams/vector.tikz}}\quad\text{and}\quad\phi^*(y)\ =\ %
\InputIfFileExists{Diagrams/phiStarY.tikz}{}{\input{./figures/Diagrams/phiStarY.tikz}} \enperiod
\end{equation}
Note these exist by the Riesz-Fr\'echet Representation Theorem.
Thus, \eqref{primal} can be written in vector form as in~\eqref{primal2} and its dual \eqref{dual} as in~\eqref{dual2} when $K_1 = K^A$ and $K_2 = K_A^{AA}$.

We now show that strong duality holds.
{As discussed in the paragraph following \eqref{eq:linear}, there are subcausal processes in the interior of the cone. That is,} there exists $\chi \in \mathrm{int}(K_A^{AA})$ satisfying
\begin{equation}
\InputIfFileExists{Diagrams/causalX.tikz}{}{\input{./figures/Diagrams/causalX.tikz}} \leq %
} \enperiod
\end{equation}
{Since $%
} \in \mathrm{int}(K_A)$, we know that \[%
} - \lambda\ %
\begin{tikzpicture}
	\begin{pgfonlayer}{nodelayer}
		\node [style=copoint] (0) at (0, 0.25) {$\nu$};
		\node [style=none] (1) at (0, -0) {};
		\node [style=none] (2) at (0, -0.75) {};
	\end{pgfonlayer}
	\begin{pgfonlayer}{edgelayer}
		\draw (1.center) to (2.center);
	\end{pgfonlayer}
\end{tikzpicture}
} \in \mathrm{int}(K_A)\] for any $\nu \in V_A$ and sufficiently small $\lambda > 0$.
Thus, there exists a $\lambda > 0$ such that
$\lambda \chi \in \mathrm{int}(K_A^{AA})$ and
\begin{equation}
} - \lambda %
\InputIfFileExists{Diagrams/causalX.tikz}{}{\input{./figures/Diagrams/causalX.tikz}} \in \mathrm{int}(K_A) \subseteq \mathrm{int}({K^A}^*).
\end{equation}
Here, we can think of $\lambda$ as a scaling factor to pull the processes into the interior of the necessary cones.
}

{
On the other hand, By Lemma~\ref{usefulLemma}, we can take $y \in \mathrm{int}(K^A)$ to get that \[%
\InputIfFileExists{Diagrams/bankStrategyYDef.tikz}{}{\input{./figures/Diagrams/bankStrategyYDef.tikz}} \in \mathrm{int}({K_A^{AA}}^*).\]
We can repeat the argument above to scale $C$ down (say, by $\mu > 0$) to get that
\begin{equation}
\InputIfFileExists{Diagrams/phiStarY.tikz}{}{\input{./figures/Diagrams/phiStarY.tikz}} - \mu \, %
\InputIfFileExists{Diagrams/combC.tikz}{}{\input{./figures/Diagrams/combC.tikz}}\in\mathrm{int}({K_A^{AA}}^*).
\end{equation}
Define $\lambda' = 1/\mu$ to get
\begin{equation}
\lambda'\ %
\begin{tikzpicture}
	\begin{pgfonlayer}{nodelayer}
		\node [style=none] (0) at (0, -0) {};
		\node [style=none] (1) at (0, 1) {};
		\node [style=point] (2) at (0, -0.25) {$y$};
	\end{pgfonlayer}
	\begin{pgfonlayer}{edgelayer}
		\draw (1.center) to (0.center);
	\end{pgfonlayer}
\end{tikzpicture}} \in \mathrm{int}(K^A)\quad\text{and}\quad \lambda'\ %
\InputIfFileExists{Diagrams/phiStarY.tikz}{}{\input{./figures/Diagrams/phiStarY.tikz}}-%
\InputIfFileExists{Diagrams/combC.tikz}{}{\input{./figures/Diagrams/combC.tikz}}\in\mathrm{int}({K_A^{AA}}^*).
\end{equation}
}

This proves that strong duality holds {(recall the definition from Section~\ref{convex})} implying that the primal and dual have the same optimal value and both problems attain an optimal solution.
\endproof


{
\begin{remark} \label{rem:alpha}
From the proof above, we see that the value $\alpha$ (recall \eqref{CP}) is attained as well. This is because the above proof easily generalizes to the case where we replace ${K^A}^*$ with $K_A$ (even the interior points are the same).
\end{remark}
}

We now prove $\wtalpha$ satisfies the properties required for Lemma~\ref{helpful}.

\begin{lm}\label{lem4}
For all spanning bank strategies $\calS_A$, we have $\wtalpha_A = 1$ implies that $\alpha_A = 1$, {i.e.} \eqref{P1} is satisfied.
\end{lm}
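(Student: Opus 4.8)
The plan is to exploit the attainment statement in Theorem~\ref{dualitythm}. Since $\wtalpha_A$ is attained, the hypothesis $\wtalpha_A = 1$ produces an explicit $\tilde{\chi} \in \caltP_A$ with $C(\tilde{\chi}) = 1$. By the very definition of $\caltP_A$, this $\tilde{\chi}$ is a genuine process, i.e. $\tilde{\chi} \in K_A^{AA}$; it is merely not yet known to be (sub)causal. The whole proof will consist of showing that, when $\calS_A$ is spanning, this $\tilde{\chi}$ is in fact causal, hence subcausal, hence already an element of $\calP_A$. That immediately gives $\alpha_A \ge C(\tilde{\chi}) = 1$, and since we already have $\alpha_A \le \wtalpha_A = 1$, we conclude $\alpha_A = 1$.

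The first step is a squeezing argument. I would write $C(\tilde{\chi})$ as the weighted sum $\sum_i p_i\,\big((e_i\otimes e_i)\circ\tilde{\chi}\circ s_i\big)$ of nonnegative reals, and bound the $i$-th term in two ways. First, by Eq.~\eqref{effect1} the effect $e_i$ lies below the discarding effect, and since discarding factorises over composite systems, $e_i\otimes e_i$ lies below $\mathrm{discard}_{AA}$; hence $(e_i\otimes e_i)\circ\tilde{\chi}\circ s_i \le \mathrm{discard}_{AA}\circ\tilde{\chi}\circ s_i$. Second, the relaxed constraint defining $\caltP_A$ says $\mathrm{discard}_A - \mathrm{discard}_{AA}\circ\tilde{\chi} \in {K^A}^*$, and evaluating this membership on the causal state $s_i \in K^A$ gives $\mathrm{discard}_{AA}\circ\tilde{\chi}\circ s_i \le \mathrm{discard}_A\circ s_i = 1$. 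So the $i$-th term is at most $p_i$. As $\sum_i p_i = 1$, all $p_i > 0$, and the total equals $1$, every term must equal $p_i$, and squeezing the two inequalities then forces $\mathrm{discard}_{AA}\circ\tilde{\chi}\circ s_i = 1$ for every $i$.

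Now I would invoke the spanning hypothesis. The previous step shows that the two linear functionals $\mathrm{discard}_{AA}\circ\tilde{\chi}$ and $\mathrm{discard}_A$ on $V^A$ agree on $s_1,\dots,s_n$; since $\calS_A$ is spanning, these states span $V^A$, so the two functionals coincide. That is exactly the statement that $\tilde{\chi}$ satisfies the causality condition~\eqref{causal}. A causal process is in particular subcausal (Eq.~\eqref{eq:subcausal} with equality), so $\tilde{\chi}$ meets both conditions defining $\calP_A$, whence $\alpha_A = 1$.

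I expect the only delicate point to be the interplay between the two orderings. The process $\tilde{\chi}$ is \emph{a priori} controlled only relative to the enlarged cone ${K^A}^*$, which by itself constrains $\tilde{\chi}$ only on causal states, not as a functional on all of $V_A$; so one cannot conclude $\tilde{\chi}\in\calP_A$ until one has \emph{first} squeezed out causality on the ensemble states and \emph{then} used spanning to bootstrap this to full causality. Everything else — that $e_i\otimes e_i \le \mathrm{discard}_{AA}$, and that membership in a dual cone unpacks to a nonnegativity statement against states — is routine.
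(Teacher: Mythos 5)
Your proposal is correct and follows essentially the same route as the paper's own proof: use attainment of $\wtalpha_A$, squeeze the weighted sum to force $\mathrm{discard}_{AA}\circ\tilde{\chi}\circ s_i = 1$ for each $i$, use the spanning hypothesis to upgrade this to causality of $\tilde{\chi}$ (the paper invokes tomography, Eq.~\eqref{eq:Tomog}, at exactly the point where you pass from agreement of functionals on $V^A$ to the causality condition~\eqref{causal}), and conclude $\tilde{\chi}\in\calP_A$ so that $\alpha_A = 1$. The ``delicate point'' you flag about the two orderings is precisely the content of the paper's argument, so no changes are needed.
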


\begin{proof}
Notice we have
\begin{equation} \label{bound}
\InputIfFileExists{Diagrams/bound.tikz}{}{\input{./figures/Diagrams/bound.tikz}}
\quad \leq \quad
\sum_i p_i %
\InputIfFileExists{Diagrams/bound1.tikz}{}{\input{./figures/Diagrams/bound1.tikz}}
 \quad \leq\quad \sum_i p_i\ =\ 1
\end{equation}
for all $\tilde{\chi} \in \caltP$.
Thus we have $\wtalpha_A \leq 1$.
Let $\tilde{\chi} \in \caltP$ be an optimal solution to \eqref{primal} and suppose $\wtalpha_A = 1$.
We see from \eqref{bound} that
\begin{equation}
\InputIfFileExists{Diagrams/bound2.tikz}{}{\input{./figures/Diagrams/bound2.tikz}}\ =\ 1 \encomma
\end{equation}
for all $i$. As the states $s_i$ are assumed to span $V^A$ this implies {for all states $s$ (and indeed for any vector $s$) that
\[%
\InputIfFileExists{Diagrams/chiCausal1.tikz}{}{\input{./figures/Diagrams/chiCausal1.tikz}} = %
\begin{tikzpicture}
	\begin{pgfonlayer}{nodelayer}
		\node [style=none] (0) at (0, -0.25) {};
		\node [style=none] (1) at (0, 0.25) {};
		\node [style=upground] (2) at (0, 0.5) {};
		\node [style=point] (3) at (0, -0.5) {$s$};
	\end{pgfonlayer}
	\begin{pgfonlayer}{edgelayer}
		\draw (1.center) to (0.center);
	\end{pgfonlayer}
\end{tikzpicture}}\]
and hence, via tomography \eqref{eq:Tomog}, that
\[%
\InputIfFileExists{Diagrams/chiCausal4.tikz}{}{\input{./figures/Diagrams/chiCausal4.tikz}} = %
\begin{tikzpicture}
	\begin{pgfonlayer}{nodelayer}
		\node [style=none] (0) at (0, -0.25) {};
		\node [style=none] (1) at (0, 0.25) {};
		\node [style=upground] (2) at (0, 0.5) {};
	\end{pgfonlayer}
	\begin{pgfonlayer}{edgelayer}
		\draw (1.center) to (0.center);
	\end{pgfonlayer}
\end{tikzpicture}} \enperiod \]
In other words} $\tilde{\chi}$ is causal and thus must be in $\calP$ as well.
Since there exists $\chi \in \calP$ such that
\begin{equation}
\InputIfFileExists{Diagrams/bound3.tikz}{}{\input{./figures/Diagrams/bound3.tikz}}\ =\ 1 \encomma
\end{equation}
we have $\alpha_A = 1$ as {desired}.
\end{proof}

We now consider the dual problem.
Consider an optimal solution to the dual problem $y \in K^A$, such that
\begin{equation}
\wtalpha\ =\ %
} \enperiod
\end{equation}
Notice that $Y := \dfrac{1}{\wtalpha} \tilde{Y}$, given diagrammatically as
\begin{equation}
\frac{1}{\wtalpha}\ %
\InputIfFileExists{Diagrams/bankStrategyYDef.tikz}{}{\input{./figures/Diagrams/bankStrategyYDef.tikz}}
\end{equation}
is a physical, causal, operation as
\begin{equation}
\frac{1}{\wtalpha}\ %
}\ \ =\ \ \frac{1}{%
}}\quad %
}
\end{equation}
is a physical, i.e. normalised, state.

\begin{remark}
The entire point of relaxing $\alpha$ to $\wtalpha$ was so that $y$ is in $K^A$, thus implying that $Y$ is a \emph{physical}, or causal, operation. We soon show that this is the keystone to our proof of $\SNC$ holding.
\end{remark}

Note that the constraint
\begin{equation}
{\wtalpha} \; %
\InputIfFileExists{Diagrams/combY.tikz}{}{\input{./figures/Diagrams/combY.tikz}} - \ %
\InputIfFileExists{Diagrams/combC.tikz}{}{\input{./figures/Diagrams/combC.tikz}}\ \ =\ \ %
\InputIfFileExists{Diagrams/constraint.tikz}{}{\input{./figures/Diagrams/constraint.tikz}}
 - \ %
\InputIfFileExists{Diagrams/combC.tikz}{}{\input{./figures/Diagrams/combC.tikz}}\ \  \in\ \ {K_{A}^{AA}}^*
\end{equation}
implies that
\begin{equation}\label{eq:YBoundC}
{\wtalpha} \; \begin{tikzpicture}
    \begin{pgfonlayer}{nodelayer}
		\node [style=none] (0) at (-0.25, 0.5) {};
		\node [style=none] (1) at (-0.25, 1) {};
		\node [style=none] (2) at (-0.5, -1) {};
		\node [style=none] (3) at (-0.75, 0.5) {};
		\node [style=none] (4) at (-0.75, 1) {};
		\node [style=none] (5) at (-0.5, -0.5) {};
		\node [style=none] (6) at (0.25, 1) {};
		\node [style=none] (7) at (-1.5, 1) {};
		\node [style=none] (8) at (-1.5, -1) {};
		\node [style=none] (9) at (0.25, -1) {};
		\node [style=none] (10) at (0.25, -1.5) {};
		\node [style=none] (11) at (-2.5, -1.5) {};
		\node [style=none] (12) at (-2.5, 1.5) {};
		\node [style=none] (13) at (0.25, 1.5) {};
		\node [style=none] (14) at (-2, -0) {$Y$};
		\node [style=none] (15) at (-1.25, 0.5) {};
		\node [style=none] (16) at (-0.75, -0.5) {};
		\node [style=none] (17) at (-0.25, -0.5) {};
		\node [style=none] (18) at (0.25, 0.5) {};
		\node [style=none] (19) at (-0.5, -0) {$\xi$};
	\end{pgfonlayer}
	\begin{pgfonlayer}{edgelayer}
		\draw (4.center) to (3.center);
		\draw (1.center) to (0.center);
		\draw (5.center) to (2.center);
		\draw (12.center) to (13.center);
		\draw (13.center) to (6.center);
		\draw (6.center) to (7.center);
		\draw (7.center) to (8.center);
		\draw (8.center) to (9.center);
		\draw (9.center) to (10.center);
		\draw (10.center) to (11.center);
		\draw (11.center) to (12.center);
		\draw (15.center) to (18.center);
		\draw (18.center) to (17.center);
		\draw (17.center) to (16.center);
		\draw (16.center) to (15.center);
	\end{pgfonlayer}
\end{tikzpicture} \ \ \geq\ \ \begin{tikzpicture}
	\begin{pgfonlayer}{nodelayer}
		\node [style=none] (0) at (-0.25, 0.5) {};
		\node [style=none] (1) at (-0.25, 1) {};
		\node [style=none] (2) at (-0.5, -1) {};
		\node [style=none] (3) at (-0.75, 0.5) {};
		\node [style=none] (4) at (-0.75, 1) {};
		\node [style=none] (5) at (-0.5, -0.5) {};
		\node [style=none] (6) at (0.25, 1) {};
		\node [style=none] (7) at (-1.5, 1) {};
		\node [style=none] (8) at (-1.5, -1) {};
		\node [style=none] (9) at (0.25, -1) {};
		\node [style=none] (10) at (0.25, -1.5) {};
		\node [style=none] (11) at (-2.5, -1.5) {};
		\node [style=none] (12) at (-2.5, 1.5) {};
		\node [style=none] (13) at (0.25, 1.5) {};
		\node [style=none] (14) at (-2, -0) {$C$};
		\node [style=none] (15) at (-1.25, 0.5) {};
		\node [style=none] (16) at (-0.75, -0.5) {};
		\node [style=none] (17) at (-0.25, -0.5) {};
		\node [style=none] (18) at (0.25, 0.5) {};
		\node [style=none] (19) at (-0.5, -0) {$\xi$};
	\end{pgfonlayer}
	\begin{pgfonlayer}{edgelayer}
		\draw (4.center) to (3.center);
		\draw (1.center) to (0.center);
		\draw (5.center) to (2.center);
		\draw (12.center) to (13.center);
		\draw (13.center) to (6.center);
		\draw (6.center) to (7.center);
		\draw (7.center) to (8.center);
		\draw (8.center) to (9.center);
		\draw (9.center) to (10.center);
		\draw (10.center) to (11.center);
		\draw (11.center) to (12.center);
		\draw (15.center) to (18.center);
		\draw (18.center) to (17.center);
		\draw (17.center) to (16.center);
		\draw (16.center) to (15.center);
	\end{pgfonlayer}
\end{tikzpicture}
 \encomma \qquad \forall \xi \in {K_{A}^{AA}} \enperiod
\end{equation}

Now let us consider appending a new system $B$ and a new counterfeiting machine which maps $AB$ to the space $AABB$. Diagrammatically, it would look like this
\begin{equation}
\begin{tikzpicture}
	\begin{pgfonlayer}{nodelayer}
		\node [style=none] (0) at (2.25, 0.5) {};
		\node [style=none] (1) at (2.25, 1.25) {};
		\node [style=none] (2) at (1.5, -1.25) {};
		\node [style=none] (3) at (-0.25, 0.5) {};
		\node [style=none] (4) at (-0.25, 1.25) {};
		\node [style=none] (5) at (1.5, -0.5) {};
		\node [style=none] (6) at (-0.75, 0.5) {};
		\node [style=none] (7) at (0, -0.5) {};
		\node [style=none] (8) at (2, -0.5) {};
		\node [style=none] (9) at (2.75, 0.5) {};
		\node [style=none] (10) at (1, -0) {$\chi$};
		\node [style=none] (11) at (0.5, 1.25) {};
		\node [style=none] (12) at (0.5, 0.5) {};
		\node [style=none] (13) at (1.5, 1.25) {};
		\node [style=none] (14) at (1.5, 0.5) {};
		\node [style=none] (15) at (0.5, -0.5) {};
		\node [style=none] (16) at (0.5, -1.25) {};
		\node [style={right label}] (17) at (-0.25, 0.75) {$A$};
		\node [style={right label}] (18) at (0.5, 0.75) {$A$};
		\node [style={right label}] (19) at (0.5, -1) {$A$};
		\node [style={right label}] (20) at (1.5, 0.75) {$B$};
		\node [style={right label}] (21) at (2.25, 0.75) {$B$};
		\node [style={right label}] (22) at (1.5, -1) {$B$};
	\end{pgfonlayer}
	\begin{pgfonlayer}{edgelayer}
		\draw (4.center) to (3.center);
		\draw (1.center) to (0.center);
		\draw (5.center) to (2.center);
		\draw (6.center) to (9.center);
		\draw (9.center) to (8.center);
		\draw (8.center) to (7.center);
		\draw (7.center) to (6.center);
		\draw (11.center) to (12.center);
		\draw (13.center) to (14.center);
		\draw (15.center) to (16.center);
	\end{pgfonlayer}
\end{tikzpicture}.
\end{equation}
Note that for any $\chi_{AB} \in \calP_{AB}$, we have that
\begin{equation} \label{Dpic}
\begin{tikzpicture}
	\begin{pgfonlayer}{nodelayer}
		\node [style=none] (0) at (2.25, 0.5) {};
		\node [style=none] (1) at (2.25, 1.25) {};
		\node [style=none] (2) at (1.5, -1.25) {};
		\node [style=none] (3) at (-0.25, 0.5) {};
		\node [style=none] (4) at (-0.25, 2) {};
		\node [style=none] (5) at (1.5, -0.5) {};
		\node [style=none] (6) at (-0.75, 0.5) {};
		\node [style=none] (7) at (0, -0.5) {};
		\node [style=none] (8) at (2, -0.5) {};
		\node [style=none] (9) at (2.75, 0.5) {};
		\node [style=none] (10) at (1, -0) {$\chi_{AB}$};
		\node [style=none] (11) at (0.5, 2) {};
		\node [style=none] (12) at (0.5, 0.5) {};
		\node [style=none] (13) at (1.5, 1.25) {};
		\node [style=none] (14) at (1.5, 0.5) {};
		\node [style=none] (15) at (0.5, -0.5) {};
		\node [style=none] (16) at (0.5, -2) {};
		\node [style={right label}] (17) at (-0.25, 0.75) {$A$};
		\node [style={right label}] (18) at (0.5, 0.75) {$A$};
		\node [style={right label}] (19) at (0.5, -1) {$A$};
		\node [style={right label}] (20) at (1.5, 0.75) {$B$};
		\node [style={right label}] (21) at (2.25, 0.75) {$B$};
		\node [style={right label}] (22) at (1.5, -1) {$B$};
		\node [style=none] (23) at (1, 1.25) {};
		\node [style=none] (24) at (1, 1.75) {};
		\node [style=none] (25) at (3.25, 1.25) {};
		\node [style=none] (26) at (3.25, -1.25) {};
		\node [style=none] (27) at (1, -1.25) {};
		\node [style=none] (28) at (1, -1.75) {};
		\node [style=none] (29) at (4.25, -1.75) {};
		\node [style=none] (30) at (4.25, 1.75) {};
		\node [style=none] (31) at (3.75, -0) {$D$};
	\end{pgfonlayer}
	\begin{pgfonlayer}{edgelayer}
		\draw (4.center) to (3.center);
		\draw (1.center) to (0.center);
		\draw (5.center) to (2.center);
		\draw (6.center) to (9.center);
		\draw (9.center) to (8.center);
		\draw (8.center) to (7.center);
		\draw (7.center) to (6.center);
		\draw (11.center) to (12.center);
		\draw (13.center) to (14.center);
		\draw (15.center) to (16.center);
		\draw (24.center) to (23.center);
		\draw (23.center) to (25.center);
		\draw (25.center) to (26.center);
		\draw (26.center) to (27.center);
		\draw (27.center) to (28.center);
		\draw (28.center) to (29.center);
		\draw (29.center) to (30.center);
		\draw (30.center) to (24.center);
	\end{pgfonlayer}
\end{tikzpicture}
\end{equation}
is in $\calP_A$ for any physical map $D$. This is because it is physically possible for a counterfeiter to do this, and thus must be captured by the set of physical processes $\calP$. We use two different choices for $D$, the bank's strategy $C$, and the alternative strategy $Y$, above.

With these ideas in hand, we can prove the following lemma.

\medskip
\begin{lm} \label{product}
$\alpha_{A_1\cdots A_m} \leq \Pi_{i=1}^m \ \wtalpha_{A_i}$, for all bank strategies $\mathcal{S}_{A_1}, \ldots, \mathcal{S}_{A_m}$. In particular, \eqref{P2} is satisfied.
\end{lm}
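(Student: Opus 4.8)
The plan is to prove the statement by induction on $m$, peeling off one system at a time using the dual solution. The base case $m=1$ is just the inequality $\alpha_{A_1} \leq \wtalpha_{A_1}$ recorded after~\eqref{primal}. For the inductive step, write $A := A_1\cdots A_{m-1}$ and $B := A_m$, and regard an arbitrary physical counterfeiter $\chi \in \calP_{A_1\cdots A_m} = \calP_{AB}$ as a subcausal map with input $AB$ and output $AABB$ (two reorganised copies of the full banknote). Because $\mathcal{S}_{A_1\cdots A_m}$ is a product strategy, its overall verification comb factors: it first closes the $A$-legs of $\chi$ with the verification comb $C_A$ of $\mathcal{S}_{A_1\cdots A_{m-1}}$ and then closes the remaining $B$-legs with the comb $C_B$ of $\mathcal{S}_B$.

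Concretely, I would first form the map $\xi_B : B \to BB$ obtained by feeding $s^A_i$ into $\chi$'s $A$-input, applying $e^A_i\otimes e^A_i$ to $\chi$'s two $A$-outputs, and taking the $p_i$-weighted sum over $i$; since $\chi\in K_{AB}^{AABB}$ and $C_A$ is built from the states and effects of $\mathcal{S}_{A_1\cdots A_{m-1}}$ with non-negative weights, $\xi_B\in K_B^{BB}$, and by the factorisation $C_{A_1\cdots A_m}(\chi) = C_B(\xi_B)$. Let $y_B\in K^B$ be a dual-optimal solution for $B$, which exists by Theorem~\ref{dualitythm}; the associated comb $Y_B = \tfrac{1}{\wtalpha_B}\tilde{Y}_B$ plugs the \emph{causal} state $\tfrac{1}{\wtalpha_B}y_B$ into the input and discards both outputs. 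Dual feasibility, in the form~\eqref{eq:YBoundC}, gives $C_B(\xi_B) \leq \wtalpha_B\, Y_B(\xi_B)$, and unpacking the definitions shows $Y_B(\xi_B) = C_A(\chi')$ where $\chi' : A \to AA$ is $\chi$ with its $B$-input fed the state $\tfrac{1}{\wtalpha_B}y_B$ and its two $B$-outputs discarded --- one of the maps of the form~\eqref{Dpic}.

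The step needing care is that $\chi'\in\calP_A$, i.e.\ $\chi'$ is a subcausal, physically realisable counterfeiter for $\mathcal{S}_{A_1\cdots A_{m-1}}$: this holds because $\chi$ is subcausal, $\tfrac{1}{\wtalpha_B}y_B$ is causal, and discarding is causal, so their composite is again subcausal and physical; and this is exactly where the relaxation to $\wtalpha$ is indispensable, since for $\alpha_B$ the dual witness need not lie in $K^B$ and $Y_B$ need not be a physical operation. Hence $C_A(\chi')\leq\alpha_{A_1\cdots A_{m-1}}$, and chaining the inequalities yields $C_{A_1\cdots A_m}(\chi)\leq\wtalpha_{A_m}\,\alpha_{A_1\cdots A_{m-1}}\leq\wtalpha_{A_m}\prod_{i=1}^{m-1}\wtalpha_{A_i}$ by the inductive hypothesis. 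Taking the supremum over $\chi\in\calP_{A_1\cdots A_m}$ proves the lemma, and specialising every $A_i$ to $A$ and every $\mathcal{S}_{A_i}$ to $\mathcal{S}_A$ gives~\eqref{P2}. The main obstacle is precisely this bookkeeping --- verifying that $\chi'$ is genuinely in $\calP_A$ and not merely in the relaxed set $\caltP_A$ --- together with keeping the product/comb decomposition of $C_{A_1\cdots A_m}$ straight, since the whole argument collapses if $Y_B$ fails to be physical.
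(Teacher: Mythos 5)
Your proof is correct and follows essentially the same route as the paper's: both hinge on the dual witness $y$ from Theorem~\ref{dualitythm}, the inequality \eqref{eq:YBoundC}, and the observation that plugging the causal operation $Y$ into one factor of a physical counterfeiter yields a physical counterfeiter for the remaining factors (the point of \eqref{Dpic}). The only difference is bookkeeping: you peel off one system per inductive step and bound the residue by $\alpha$ of the smaller product strategy, whereas the paper replaces every verification comb by a $Y$-comb and bounds the resulting fully-discarded quantity by $1$ at the end using subcausality of $\chi$ --- both are valid.
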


\begin{proof}
We prove the $m=2$ case for clarity, but the general case follows by repeating the same argument. We consider a pair of bank strategies, $\mathcal{S}_A$ and $\mathcal{S}_B$ and show that $\alpha_{AB} \leq \wtalpha_A \wtalpha_B$.
Consider
\begin{equation}
\alpha_{AB} \ =\ %
\InputIfFileExists{Diagrams/productForgingProb.tikz}{}{\input{./figures/Diagrams/productForgingProb.tikz}}
\end{equation}
{where $\chi_{AB} \in \calP$ is an optimal solution to \eqref{CP} (which exists by Remark~\ref{rem:alpha})}.
By a trivial diagrammatic rewrite, this is equivalent to
\begin{equation}
\alpha_{AB} \ =\ %
\InputIfFileExists{Diagrams/optimalProductForgingProb2.tikz}{}{\input{./figures/Diagrams/optimalProductForgingProb2.tikz}} \enperiod
\end{equation}
As discussed previously, we have the dotted part in the diagram above is in $K_A^{AA}$. Therefore, from \eqref{eq:YBoundC}, we have
\begin{equation}
\alpha_{AB} \ \leq\  {\wtalpha_{A}} \; %
\InputIfFileExists{Diagrams/optimalProductForgingProb25.tikz}{}{\input{./figures/Diagrams/optimalProductForgingProb25.tikz}} \enperiod
\end{equation}
By another diagrammatic rewrite, we have
\begin{equation}
\alpha_{AB} \ \leq\  {\wtalpha_{A}} \; \begin{tikzpicture}
	\begin{pgfonlayer}{nodelayer}
		\node [style=none] (0) at (2.5, 0.5) {};
		\node [style=none] (1) at (2.5, 2) {};
		\node [style=none] (2) at (2, -2) {};
		\node [style=none] (3) at (-0.5, 0.5) {};
		\node [style=none] (4) at (-0.5, 1) {};
		\node [style=none] (5) at (2, -0.5) {};
		\node [style=none] (6) at (-1, 0.5) {};
		\node [style=none] (7) at (-0.5, -0.5) {};
		\node [style=none] (8) at (2.5, -0.5) {};
		\node [style=none] (9) at (3, 0.5) {};
		\node [style=none] (10) at (1, -0) {${\chi}_{AB}$};
		\node [style=none] (11) at (0.25, 1) {};
		\node [style=none] (12) at (0.25, 0.5) {};
		\node [style=none] (13) at (1.75, 2) {};
		\node [style=none] (14) at (1.75, 0.5) {};
		\node [style=none] (15) at (0, -0.5) {};
		\node [style=none] (16) at (0, -1) {};
		\node [style=none] (17) at (0.75, 1) {};
		\node [style=none] (18) at (1.25, 2) {};
		\node [style=none] (19) at (1.25, 2.5) {};
		\node [style=none] (20) at (3.5, 2) {};
		\node [style=none] (21) at (3.5, -2) {};
		\node [style=none] (22) at (1.5, -2) {};
		\node [style=none] (23) at (1.5, -2.5) {};
		\node [style=none] (24) at (4.5, -2.5) {};
		\node [style=none] (25) at (4.5, 2.5) {};
		\node [style=none] (26) at (0.75, 1.5) {};
		\node [style=none] (27) at (-1.5, 1) {};
		\node [style=none] (28) at (-2.5, 1.5) {};
		\node [style=none] (29) at (-1.5, -1) {};
		\node [style=none] (30) at (-2.5, -1.5) {};
		\node [style=none] (31) at (0.5, -1) {};
		\node [style=none] (32) at (0.5, -1.5) {};
		\node [style=none] (33) at (-2, -0) {${Y}_A$};
		\node [style=none] (34) at (4, -0) {$C_B$};
		\node [style=none] (35) at (-2.75, 1.75) {};
		\node [style=none] (36) at (-2.75, -1.75) {};
		\node [style=none] (37) at (3.25, -1.75) {};
		\node [style=none] (38) at (3.25, 1.75) {};
	\end{pgfonlayer}
	\begin{pgfonlayer}{edgelayer}
		\draw (4.center) to (3.center);
		\draw (1.center) to (0.center);
		\draw (5.center) to (2.center);
		\draw (6.center) to (9.center);
		\draw (9.center) to (8.center);
		\draw (8.center) to (7.center);
		\draw (7.center) to (6.center);
		\draw (11.center) to (12.center);
		\draw (13.center) to (14.center);
		\draw (15.center) to (16.center);
		\draw (28.center) to (26.center);
		\draw (26.center) to (17.center);
		\draw (17.center) to (27.center);
		\draw (27.center) to (29.center);
		\draw (28.center) to (30.center);
		\draw (19.center) to (18.center);
		\draw (18.center) to (20.center);
		\draw (20.center) to (21.center);
		\draw (21.center) to (22.center);
		\draw (22.center) to (23.center);
		\draw (23.center) to (24.center);
		\draw (24.center) to (25.center);
		\draw (25.center) to (19.center);
		\draw (29.center) to (31.center);
		\draw (31.center) to (32.center);
		\draw (32.center) to (30.center);
		\draw [style={thick gray dashed edge}] (35.center) to (38.center);
		\draw [style={thick gray dashed edge}] (38.center) to (37.center);
		\draw [style={thick gray dashed edge}] (35.center) to (36.center);
		\draw [style={thick gray dashed edge}] (36.center) to (37.center);
	\end{pgfonlayer}
\end{tikzpicture} \enperiod
\end{equation}
By repeating the same argument noting the dotted part above is physical, we have
\begin{equation}
\alpha_{AB} \ \leq\ {\wtalpha_{A} \wtalpha_{B}} \;  %
\InputIfFileExists{Diagrams/productForgingProbY.tikz}{}{\input{./figures/Diagrams/productForgingProbY.tikz}} \enperiod
\end{equation}
From the definition of $Y$, it is clear that
\begin{equation}
\InputIfFileExists{Diagrams/productForgingProbY.tikz}{}{\input{./figures/Diagrams/productForgingProbY.tikz}} \; = \ \frac{1}{\wtalpha_A\wtalpha_B}\ \begin{tikzpicture}
	\begin{pgfonlayer}{nodelayer}
		\node [style=none] (0) at (2.5, 0.5) {};
		\node [style=none] (1) at (2.5, 1) {};
		\node [style=none] (2) at (2, -1) {};
		\node [style=none] (3) at (-0.5, 0.5) {};
		\node [style=none] (4) at (-0.5, 1) {};
		\node [style=none] (5) at (2, -0.5) {};
		\node [style=none] (6) at (-1, 0.5) {};
		\node [style=none] (7) at (-0.5, -0.5) {};
		\node [style=none] (8) at (2.5, -0.5) {};
		\node [style=none] (9) at (3, 0.5) {};
		\node [style=none] (10) at (1, -0) {${\chi}_{AB}$};
		\node [style=none] (11) at (0.25, 1) {};
		\node [style=none] (12) at (0.25, 0.5) {};
		\node [style=none] (13) at (1.75, 1) {};
		\node [style=none] (14) at (1.75, 0.5) {};
		\node [style=none] (15) at (0, -0.5) {};
		\node [style=none] (16) at (0, -1) {};
		\node [style=point] (17) at (0, -1.25) {$y^A$};
		\node [style=point] (18) at (2, -1.25) {$y^B$};
		\node [style=upground] (19) at (-0.5, 1.25) {};
		\node [style=upground] (20) at (0.25, 1.25) {};
		\node [style=upground] (21) at (1.75, 1.25) {};
		\node [style=upground] (22) at (2.5, 1.25) {};
	\end{pgfonlayer}
	\begin{pgfonlayer}{edgelayer}
		\draw (4.center) to (3.center);
		\draw (1.center) to (0.center);
		\draw (5.center) to (2.center);
		\draw (6.center) to (9.center);
		\draw (9.center) to (8.center);
		\draw (8.center) to (7.center);
		\draw (7.center) to (6.center);
		\draw (11.center) to (12.center);
		\draw (13.center) to (14.center);
		\draw (15.center) to (16.center);
	\end{pgfonlayer}
\end{tikzpicture}\ \leq \ \frac{1}{\wtalpha_A\wtalpha_B}%
\begin{tikzpicture}
	\begin{pgfonlayer}{nodelayer}
		\node [style=none] (0) at (0, -0.25) {};
		\node [style=none] (1) at (0, 0.25) {};
		\node [style=upground] (2) at (0, 0.5) {};
		\node [style=point] (3) at (0, -0.5) {$y^A$};
	\end{pgfonlayer}
	\begin{pgfonlayer}{edgelayer}
		\draw (1.center) to (0.center);
	\end{pgfonlayer}
\end{tikzpicture}
}%
\begin{tikzpicture}
	\begin{pgfonlayer}{nodelayer}
		\node [style=none] (0) at (0, -0.25) {};
		\node [style=none] (1) at (0, 0.25) {};
		\node [style=upground] (2) at (0, 0.5) {};
		\node [style=point] (3) at (0, -0.5) {$y^B$};
	\end{pgfonlayer}
	\begin{pgfonlayer}{edgelayer}
		\draw (1.center) to (0.center);
	\end{pgfonlayer}
\end{tikzpicture}
} = \; 1 \encomma
\end{equation}
since $\chi_{AB}$ is subcausal.
This finishes the proof.
\end{proof}

The intuition behind this proof is that a counterfeiter could implement the $Y_B$ or $C_B$ strategy him/herself, and this should not allow him/her to pass the verification on $A$ with higher probability.
Also, a counterfeiter trying to cheat the $Y$ strategies is pointless since the $Y$ strategy simply discards the output of any counterfeiting machine.

{One thing to note in the proof of Lemma~\ref{product} is that we used $\chi \in \calP$ and not a $\tilde{\chi} \in \caltP$. This is because the combination of $\chi$ and a physical map $D$ must be in the cone $K_A^{AA}$ (see \eqref{Dpic}). One can show that this is the case for any $\tilde{\chi} \in \caltP$, as $\tilde{\chi}$ can be rescaled to belong to $\calP$, hence we can repeat the proof to show that $\wtalpha$ is multiplicative over the theory. This is a stronger claim than the one in the lemma but is not necessary for our main result.}

By combining Lemmas~\ref{helpful}, \ref{lem4}, and \ref{product}, we have the following theorem.

\begin{thm}
If $\WNC$ holds, then $\SNC$ holds.
In other words, $\WNC$ is necessary and sufficient to make unforgeable money.
\end{thm}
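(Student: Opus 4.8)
The plan is to prove the two implications separately. The direction $\SNC \Rightarrow \WNC$ is immediate: if for every $\delta>0$ there is a bank strategy $\mathcal{S}_A$ with $\alpha_A \le \delta$, then taking, say, $\delta = 1/2$ produces a bank strategy witnessing $\WNC$. So all the content is in showing $\WNC \Rightarrow \SNC$, and for this I would simply verify the hypotheses of Lemma~\ref{helpful} and invoke it.

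Concretely, I would first recall that the relaxed quantity $\wtalpha$ is well-defined and strictly positive for every bank strategy, since $0 < \alpha_A \le \wtalpha_A$ as observed after \eqref{primal}. Lemma~\ref{lem4} establishes exactly property \eqref{P1} (for spanning bank strategies), and Lemma~\ref{product}, specialised to $A_1 = \cdots = A_n = A$ with the same strategy $\mathcal{S}_A$ repeated $n$ times, establishes property \eqref{P2}, namely $\alpha_{A^{\otimes n}} \le (\wtalpha_A)^n$ for all bank strategies. With both \eqref{P1} and \eqref{P2} in hand for the quantity $\wtalpha > 0$, Lemma~\ref{helpful} yields $\WNC \Rightarrow \SNC$ directly.

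For clarity I would also spell out the short argument buried inside Lemma~\ref{helpful}. Assuming $\WNC$, Lemma~\ref{lemma} furnishes a \emph{spanning} bank strategy $\mathcal{S}_A$ with $\alpha_A < 1$; property \eqref{P1} then gives $\wtalpha_A < 1$ for this strategy; and property \eqref{P2}, applied to the $n$-fold product strategy $\mathcal{S}_{A^{\otimes n}}$, gives $\alpha_{A^{\otimes n}} \le (\wtalpha_A)^n$, which tends to $0$ as $n \to \infty$. Hence, given any target $\delta > 0$, choosing $n$ large enough that $(\wtalpha_A)^n \le \delta$ produces a bank strategy achieving $\alpha \le \delta$, which is precisely $\SNC$. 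The final sentence of the theorem (``necessary and sufficient'') then just records both implications together.

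The only real subtlety — and the reason the proof is organised the way it is — is the mismatch between the quantifiers in \eqref{P1} and \eqref{P2}: \eqref{P1} is only available for spanning bank strategies, while \eqref{P2} needs no such restriction, and conversely a bare $\WNC$ hypothesis produces a strategy that need not be spanning. Lemma~\ref{lemma} is exactly the bridge that reconciles this, upgrading an arbitrary $\WNC$-witness to a spanning one without spoiling $\alpha_A < 1$. Given that lemma and the two properties of $\wtalpha$, there is no further obstacle at this stage; the genuinely hard technical work has already been discharged in Lemma~\ref{product} (via the cone-programming strong duality of Theorem~\ref{dualitythm}) and in Lemma~\ref{lem4}.
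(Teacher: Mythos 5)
Your proposal is correct and follows essentially the same route as the paper, which proves the theorem by combining Lemmas~\ref{helpful}, \ref{lem4}, and \ref{product} exactly as you do; your added remarks on the trivial $\SNC \Rightarrow \WNC$ direction and on the role of Lemma~\ref{lemma} in bridging the spanning/non-spanning quantifier mismatch are accurate elaborations of what the paper leaves implicit.
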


{By combining this with Corollary~\ref{cor:1} we have the following corollary.
\begin{cor}\label{cor:2}
In any non-classical GPT satisfying the No-Restriction Hypothesis \cite{chiribella2010probabilistic} and Tomographic Locality \cite{hardy2001quantum}, $\VS$ is sufficient to make unforgeable money.
\end{cor}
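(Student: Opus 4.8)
The plan is to obtain Corollary~\ref{cor:2} by simply composing two facts already established above. Corollary~\ref{cor:1} tells us that in a non-classical GPT satisfying the No-Restriction Hypothesis and Tomographic Locality, the existence of a bank strategy satisfying $\VS$ already implies $\WNC$; and the preceding theorem tells us that $\WNC$ implies $\SNC$. Since $\SNC$ is, by definition, precisely the statement that for every tolerance $\delta>0$ there is a bank strategy $\calS_A$ with $\alpha_A\le\delta$, i.e., that the bank can prepare money whose forging probability lies below any desired threshold — which is exactly what ``unforgeable money'' means in our practical sense — the chain $\VS\implies\WNC\implies\SNC$ is the whole claim. So there is essentially nothing new to do here; all the content sits in the two inputs, and I would present the corollary as a two-line consequence of them.

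For the benefit of the reader I would nonetheless recall why each arrow holds, so that the corollary reads self-contained. The first arrow, $\VS\implies\WNC$, is Corollary~\ref{cor:1}, which in turn rests on Lemma~\ref{lem:WNC} (under $\VS$, perfect counterfeiting $\alpha_A=1$ — attained, by Remark~\ref{rem:alpha} — is equivalent to broadcastability of the ensemble) together with the result of~\cite{barnum2007generalized} that in this class of GPTs a broadcastable ensemble forces classicality; hence in a non-classical theory the ensemble is not broadcastable, so $\alpha_A<1$, so that very strategy witnesses $\WNC$. The second arrow, $\WNC\implies\SNC$, is the preceding theorem, assembled from Lemma~\ref{lemma} (pass to a \emph{spanning} strategy with $\alpha_A<1$), Lemma~\ref{lem4} (so that $\wtalpha_A<1$ as well, property~\eqref{P1}) and Lemma~\ref{product} (so that the $n$-fold product strategy satisfies $\alpha_{A^{\otimes n}}\le(\wtalpha_A)^n\to 0$, property~\eqref{P2}); given $\delta$ one just picks $n$ with $(\wtalpha_A)^n\le\delta$, exactly as in Lemma~\ref{helpful}.

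The only genuinely delicate point — and where the real effort already went — is the relaxation from $\alpha$ to $\wtalpha$ underlying the product bound: it is essential that the optimal dual vector $y$ lives in the \emph{state} cone $K^A$, so that the associated operation $Y$ (the normalised $\tilde Y/\wtalpha$) is a legitimate causal process, since the proof of Lemma~\ref{product} proceeds by substituting $Y$ (and the honest $C$) for the counterfeiter's downstream map and needs the resulting diagram to stay inside $\calP_A$. This is precisely what strong duality of the cone program (Theorem~\ref{dualitythm}), via Lemma~\ref{usefulLemma}, buys us. By the time we reach Corollary~\ref{cor:2} that obstacle has already been cleared, so the corollary itself presents no difficulty and is immediate.
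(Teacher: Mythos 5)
Your proposal is correct and follows exactly the paper's route: the paper derives Corollary~\ref{cor:2} in one line by chaining Corollary~\ref{cor:1} ($\VS\implies\WNC$ in this class of non-classical GPTs) with the theorem that $\WNC\implies\SNC$, which is precisely your argument. The additional recap of where each arrow comes from is accurate but not needed for the corollary itself.
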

}
\section{Conclusion}

In this paper we have considered Wiesner's quantum money scheme in the generalised probabilistic theory framework.
We first defined the class of GPTs in which there is potential for unforgeable money, that is, those satisfying the $\WNC$ assumption.
We then demonstrate that under an assumption of Verification Sharpness, this is equivalent to the inability to broadcast arbitrary causal states --- a general feature of non-classical GPTs, for example, those in \cite{barnum2007generalized}.

To obtain meaningful security however, we require that the probability of successfully counterfeiting can be made arbitrarily close to zero, that is, the assumption of $\SNC$.
Demonstrating that this is indeed possible for arbitrary GPTs satisfying $\WNC$ is the main result of this paper.
That is, we have a dichotomy: for any theory, either practical security is possible or perfect counterfeiting is possible.

It would be interesting to see if this work could be extended in a way similar to the quantum schemes that have been developed in recent years. For example, is it possible to have a purely classical verification scheme in the GPT setting?
Would it be possible to have a store merchant be able to verify the money without the need of involving the bank?
Both of these are possible in the quantum setting, see \cite{gavinsky2012quantum} and \cite{aaronson2012quantum}, respectively.
These scenarios are interesting, but add to the complexity of the problem considerably. As this work shows that GPT money is indeed physically possible in many theories, it paves the way to look at these more elaborate scenarios which are more convenient for the bank.

\section*{Acknowledgments}
This research was supported in part by Perimeter Institute for Theoretical Physics. Research at Perimeter Institute is supported by the Government of Canada through the Department of Innovation, Science and Economic Development Canada and by the Province of Ontario through the Ministry of Research, Innovation and Science.

\bibliographystyle{plainnat}
\bibliography{bibliography}

\appendix


\section{Proof of Lemma~\ref{lem:WNC}}
\label{appproof}

\begin{proof}
First let us assume perfect counterfeiting, i.e., $\alpha_A=1$.
{Since the value of $\alpha_A$ is attained (see Remark~\ref{rem:alpha}), we know that there exists a subcausal $\chi$ such that}
\begin{equation}
\InputIfFileExists{Diagrams/verifSharp3.tikz}{}{\input{./figures/Diagrams/verifSharp3.tikz}}\ =\ 1 \encomma
\end{equation}
for all $i$. $\VS$ therefore implies that for all $i$, we have
\begin{equation}
\InputIfFileExists{Diagrams/verifSharp4.tikz}{}{\input{./figures/Diagrams/verifSharp4.tikz}} \enperiod
\end{equation}
As each $s_i$ is causal, this implies that for all $i$,
\begin{equation}
\InputIfFileExists{Diagrams/verifSharp5.tikz}{}{\input{./figures/Diagrams/verifSharp5.tikz}} \enperiod
\end{equation}
Using $\VS$ again, we have
\begin{equation}
\InputIfFileExists{Diagrams/verifSharp6.tikz}{}{\input{./figures/Diagrams/verifSharp6.tikz}}
\end{equation}
for all $i$, which is precisely what it means for $\chi$ to be able to broadcast all of the states $s_i$.

For the other direction, let us assume there is a channel $B$ which broadcasts the states $s_i$:
\begin{equation}
\InputIfFileExists{Diagrams/broadcast1.tikz}{}{\input{./figures/Diagrams/broadcast1.tikz}} \enperiod
\end{equation}
Therefore we have, for all $i$:
\begin{equation} \label{eq34}
\InputIfFileExists{Diagrams/broadcast2.tikz}{}{\input{./figures/Diagrams/broadcast2.tikz}} \enperiod
\end{equation}
As we know $e_i$ are subcausal, we can decompose the discarding map as $e_i+E_i$ where each $E_i$ is a subcausal effect.
This gives, for all $i$:
\begin{equation}\label{eq34NEW}
\InputIfFileExists{Diagrams/broadcast3.tikz}{}{\input{./figures/Diagrams/broadcast3.tikz}} \enperiod
\end{equation}
We use the subcausality of $B$ (and the causality of each $s_i$) to write
\begin{equation} \label{pics}
\InputIfFileExists{Diagrams/broadcast4.tikz}{}{\input{./figures/Diagrams/broadcast4.tikz}}
\end{equation}
using the same decomposition of the discarding map as above.
Combining Eqs.~\eqref{eq34NEW} and \eqref{pics}, we have
\begin{equation}
\InputIfFileExists{Diagrams/broadcast5.tikz}{}{\input{./figures/Diagrams/broadcast5.tikz}}
\end{equation}
and
\begin{equation}
\InputIfFileExists{Diagrams/broadcast6.tikz}{}{\input{./figures/Diagrams/broadcast6.tikz}}\ =\ 1
\end{equation}
for all $i$, {since each number is nonnegative}. This implies that perfect counterfeiting is possible.
\end{proof}

\section{Proof of Lemma~\ref{usefulLemma}
}
\label{AppA}
The following well-known fact (see, e.g., \cite{BV}) characterises the interior of the dual cone.

\begin{fact}
For a \emph{closed} convex cone $K$, we have
\begin{equation}
\mathrm{int}(K^*) = \{ y : \inner{y}{x} > 0, \text{ for all } x \in K \setminus \{ 0 \} \}.
\end{equation}
\end{fact}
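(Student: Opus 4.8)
The plan is to prove the set equality $\mathrm{int}(K^*) = S$, where I write $S := \{\, y : \inner{y}{x} > 0 \text{ for all } x \in K \setminus \{0\} \,\}$ for the right-hand side, by establishing the two inclusions separately. The only tools needed are the definition of the dual cone (Definition~\ref{dualcone}), the closedness of $K$ asserted in the hypothesis, and compactness of the unit sphere in the ambient finite-dimensional inner product space. Throughout, the origin is harmless since $\inner{y}{0} = 0$, so it suffices to control the behaviour on $K \setminus \{0\}$.

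For the inclusion $\mathrm{int}(K^*) \subseteq S$, I would fix $y \in \mathrm{int}(K^*)$ and an arbitrary $x \in K \setminus \{0\}$, and show $\inner{y}{x} > 0$. Because $y$ is an interior point, an entire ball around $y$ lies in $K^*$; hence $y - \delta x \in K^*$ for all sufficiently small $\delta > 0$ (explicitly, any $\delta < \epsilon / \|x\|$ where $B(y,\epsilon) \subseteq K^*$, using $\|x\| > 0$). Pairing $y - \delta x \in K^*$ against $x \in K$ via the defining inequality of $K^*$ gives $\inner{y - \delta x}{x} \geq 0$, that is, $\inner{y}{x} \geq \delta \|x\|^2 > 0$. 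Since $x$ was arbitrary, $y \in S$.

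For the reverse inclusion $S \subseteq \mathrm{int}(K^*)$, I would first observe that any $y \in S$ automatically lies in $K^*$, since its strict positivity on $K \setminus \{0\}$ certainly implies the non-strict inequality defining $K^*$. To promote this to interior membership I would argue by contradiction: if $y$ lay on the boundary of $K^*$, there would be a sequence $y_k \to y$ with $y_k \notin K^*$, and for each $k$ a witness $x_k \in K$ with $\inner{y_k}{x_k} < 0$. As $K$ is a cone, I may rescale so that $\|x_k\| = 1$ without altering the sign of the inner product. Compactness of the unit sphere then furnishes a subsequence $x_k \to x^\ast$ with $\|x^\ast\| = 1$, and closedness of $K$ guarantees $x^\ast \in K \setminus \{0\}$. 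Passing to the limit in $\inner{y_k}{x_k} < 0$ yields $\inner{y}{x^\ast} \leq 0$, contradicting $y \in S$. Hence $y \in \mathrm{int}(K^*)$.

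The main (indeed essentially the only nontrivial) obstacle is the limiting step in the second inclusion. Two things must be prevented: the witnesses $x_k$ escaping to infinity, and the limit point leaving the cone. The first is handled by normalising $\|x_k\| = 1$ and invoking compactness of the sphere, which is exactly where finite-dimensionality enters; the second is handled by the closedness of $K$, which is precisely the hypothesis flagged in the statement. Without closedness the limit $x^\ast$ could fail to belong to $K$ and the contradiction would collapse, so I would be careful to use that assumption explicitly at this point.
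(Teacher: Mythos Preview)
Your proof is correct and standard. The paper does not actually supply a proof of this fact: it is stated as a well-known characterisation of the interior of the dual cone with a pointer to~\cite{BV}, so there is no in-paper argument to compare your approach against.
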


{We are now ready to prove  Lemma~\ref{usefulLemma}.}

\proof
We want to show that, for $\chi \in K_A^{AA}$,
\begin{equation}
\InputIfFileExists{Diagrams/interiorPoint1.tikz}{}{\input{./figures/Diagrams/interiorPoint1.tikz}}\ =\ 0 \quad \implies \quad %
\InputIfFileExists{Diagrams/interiorPoint2.tikz}{}{\input{./figures/Diagrams/interiorPoint2.tikz}}\ = \ 0.
\end{equation}
To do so, however, {recall} the notion of \emph{tomography} which is how equality is defined for processes in GPTs. We have two processes $f,g:A\to B$ are equal if and only if
\begin{equation}
\InputIfFileExists{Diagrams/tomographyf.tikz}{}{\input{./figures/Diagrams/tomographyf.tikz}}\ = \ %
\InputIfFileExists{Diagrams/tomographyg.tikz}{}{\input{./figures/Diagrams/tomographyg.tikz}} \quad \forall C, s\in K^{AC}, e\in K_{BC}.
\end{equation}
Therefore we need to check that
\begin{equation}
\InputIfFileExists{Diagrams/interiorPoint3.tikz}{}{\input{./figures/Diagrams/interiorPoint3.tikz}}\ =\ 0 \quad  \forall C, s\in K^{AC}, e\in K_{AAC}.
\end{equation}
To see this first note that as $y$ is interior, any other state, $a\in K^A$ belongs to some convex decomposition of $y$. Therefore we have
\begin{equation}
\InputIfFileExists{Diagrams/interiorPoint1.tikz}{}{\input{./figures/Diagrams/interiorPoint1.tikz}}\ =\ 0 \quad \implies \quad %
\InputIfFileExists{Diagrams/interiorPoint4.tikz}{}{\input{./figures/Diagrams/interiorPoint4.tikz}}\ = \ 0 \quad \forall a\in K^A.
\end{equation}
In particular, we can take $a$ to be the `marginal' of some bipartite state, $s$, i.e.
\begin{equation}
\InputIfFileExists{Diagrams/interiorPoint5.tikz}{}{\input{./figures/Diagrams/interiorPoint5.tikz}}\ = \ 0.
\end{equation}
This holds for any state $s$ and any system $C$ so we can rewrite this as
\begin{equation}
\InputIfFileExists{Diagrams/interiorPoint6.tikz}{}{\input{./figures/Diagrams/interiorPoint6.tikz}}\ = \ 0 \quad \forall C, s\in K^{AC}.
\end{equation}
Note that  the composite of the three discarding effects is the discarding effect for the composite system. This discarding effect is in the interior of $K_{AAC}$, so any effect $e\in K_{AAC}$ can appear in some convex decomposition of the discarding effect. Therefore we have
\begin{equation}
\InputIfFileExists{Diagrams/interiorPoint3.tikz}{}{\input{./figures/Diagrams/interiorPoint3.tikz}}\ = \ 0 \quad \forall C, s\in K^{AC}, e\in K_{AAC}.
\end{equation}
This concludes our proof as this is exactly the condition needed for tomography to show that $\chi=0$.
\endproof

\end{document}